\newcommand\apj{{ApJ}}%
\newcommand\apjl{{ApJ}}%
\newcommand\aap{{A\&A}}%
\newcommand\jcap{{J. Cosmology Astropart. Phys.}}%
\newcommand\mnras{{MNRAS}}%
\newcommand\zap{{ZAp}}%
\title[Spherically Averaging Clusters: I. Analytical Relations]{Spherically Averaging Ellipsoidal Galaxy Clusters
in X-Ray and Sunyaev-Zel'dovich Studies: I.\ Analytical Relations}
\author[D.\ A.\ Buote and P.\ J.\ Humphrey]{David A.\
Buote\thanks{E-mail: buote@uci.edu} and
Philip J.\ Humphrey \\ Department of Physics and Astronomy, University
of California, Irvine, 4129 Frederick Reines Hall, Irvine, CA
92697-4575, USA}
\date{Accepted 2011 November 7. Received 2011 November 3; in original form 2011 September 29}
\begin{document} 
\maketitle

\begin{abstract}
This is the first of two papers investigating the deprojection and
spherical averaging of ellipsoidal galaxy clusters. We specifically
consider applications to hydrostatic X-ray and Sunyaev-Zel'dovich (SZ)
studies, though many of the results also apply to isotropic
dispersion-supported stellar dynamical systems. Here we present
analytical formulas for galaxy clusters described by a gravitational
potential that is a triaxial ellipsoid of constant shape and
orientation.  For this model type we show that the mass bias due to
spherically averaging X-ray observations is independent of the
temperature profile, and for the special case of a scale-free
logarithmic potential, there is exactly zero mass bias for any shape,
orientation, and temperature profile. The ratio of spherically
averaged intracluster medium (ICM) pressures obtained from SZ and
X-ray measurements depends only on the ICM intrinsic shape, projection
orientation, and $H_0$, which provides another illustration of how
cluster geometry can be recovered through a combination of X-ray and
SZ measurements.  We also demonstrate that $Y_{\rm SZ}$ and $Y_{\rm
X}$ have different biases owing to spherical averaging, which leads to
an offset in the spherically averaged $Y_{\rm SZ} - Y_{\rm X}$
relation.  A potentially useful application of the analytical formulas
presented is to assess the error range of an observable (e.g., mass,
$Y_{\rm SZ}$) accounting for deviations from assumed spherical
symmetry, without having to perform the ellipsoidal deprojection
explicitly.  Finally, for dedicated ellipsoidal studies, we also
generalize the spherical onion peeling method to the triaxial case for
a given shape and orientation.

\end{abstract}

\begin{keywords}{X-rays: galaxies: clusters ---  X-rays: galaxies --- dark
matter --- cosmological parameters --- cosmology:observations}
\end{keywords}

\newtheorem{defn}{Definition}
\newtheorem{thm}{Theorem}
\newtheorem{coro}{Corollary}
\newtheorem{lem}{Lemma}

\section{Introduction}
\label{intro}

Although it is well-known that galaxy clusters are not spherical,
spherically averaged measurements of cluster properties are standard
practice owing to their comparative simplicity and the expectation
that non-spherical effects do not dominate the error budget. However,
as cosmological measurements with clusters become ever more
precise~\citep[e.g.,][]{vikh09b,alle11a,pier11a}, the need for precise
control of systematic errors also increases.  Deviations from
spherical symmetry, such as intrinsic flattening and substructure,
will necessarily introduce scatter, and possibly biases, into
spherically averaged global scaling relations used in cosmological
studies~\citep[e.g.,][]{evra96a,whit02a,krav06a,shaw08a,krau11a};
e.g., the scaling between total mass and average intracluster medium
(ICM) temperature, quantities that are usually computed interior to a
spherical volume specified by a standard fraction of the virial
radius.

Unfortunately, owing to their increased complexity and computational
expense, non-spherical models are rarely accounted for in detail, if
at all, in the error budgets of cluster measurements.  The modeling of
a cluster possessing substructure requires sophisticated
three-dimensional N-body simulations, while even to compute the
gravitational potential of simple ellipsoidal mass distributions
involves solving complicated integrals~\citep[e.g.,][]{chan87,bt}
unless approximate fitting formulas appropriate for nearly spherical
systems are employed~\citep{lee04a}. The pioneering studies by
\citet{piff03a} and~\citet{gava05a} dedicated to the problem of
assessing errors arising from the assumption of spherical symmetry
employed ellipsoidal models with some simplifying assumptions to
reduce the total computational expense; e.g., \citet{gava05a} studied
the face-on projections of spheroidal NFW~\citep{nfw} mass
distributions with isothermal ICM. These authors found that quantities
obtained from spherical X-ray studies vary typically by $\la 5\%$ as a
result of different intrinsic shapes and viewing orientations for a
cluster.  While not a large effect in absolute terms, errors of a few
percent may be important for precision cosmological studies and
deserve further elucidation.

The non-spherical shapes of clusters need not be considered merely a
nuisance as a source of systematic error, since they are interesting
to study in their own right. With the aid of large, cosmological
N-body simulations significant theoretical progress has been made in
the understanding of the intrinsic shapes of $\Lambda$CDM cluster dark
matter halos
\citep[e.g.,][]{jing95a,mohr95a,spli97a,thom98a,jing02a,bull02a,suwa03a,spri04a,hopk05a,kasu05a,lee05a,allg06a,lee06a,paz06a,ho06a,shaw06a,bett07a,gott07a,macc08a,muno11a,ross11a}.
The axial ratios of dark matter halos in $\Lambda$CDM are found to be
sensitive to the value of $\sigma_8$, the normalization of the power
spectrum of density fluctuations, and are weakly sensitive to the
cosmological matter density parameter, $\Omega_m$. Dark matter halos
in $\Lambda$CDM become flatter with increasing mass, and cluster-mass
halos tend to be prolate/triaxial. Furthermore, halos will be more
spherical if the dark matter particle is self-interacting, which
provides an additional constraint on particle dark matter models,
especially on the galaxy scale~\citep[e.g.,][]{feng09a,feng10a}.

For a few clusters X-ray observations have measured flattened dark
matter halos by adopting hydrostatic models of the
ICM~\citep{frg,buot92a,buot96c}, and there is recent evidence for triaxiality
in the dark matter when X-ray data are combined with constraints from
strong gravitational
lensing~\citep{mora10a,mora11a,mora11b}. Isothermal hydrostatic models
of the ICM applied to the dark matter halos formed in cosmological
simulations predict distributions of average X-ray isophotal axial
ratios consistent with cluster
observations~\citep{wang04a,flor07a,kawa10a}. When baryons are
included in N-body, hydrodynamical simulations of cosmic structure, it
is found that baryon condensation leads to rounder dark matter
halos~\citep[e.g.,][]{dubi94a,kaza04a,deba08a}. However, the treatment
of baryon evolution in cosmological simulations remains a problem for
cluster cores. For example, baryon cooling in some cosmological
simulations can lead to highly flattened ICM cluster cores that
disagree with X-ray observations of relaxed clusters~\citep{fang09a},
whereas the global ICM shapes of the model clusters agree with X-ray
observations and also with the shapes of the total gravitational
potentials of the models as expected for hydrostatic
equilibrium~\citep{buot95a,fang09a,lau11a}. ICM shapes have also been
measured via the thermal Sunyaev-Zel'dovich (SZ) effect~\citep{defi05a,saye11a} which provides
another promising avenue for probing intrinsic cluster shapes.

The non-spherical results cited above represent only a small minority
of X-ray and SZ cluster studies. To a large extent spherical models
dominate because they are easier to implement and are substantially
more computationally efficient than are models which involve a
gravitational potential generated by an ellipsoidal mass
distribution. Here we investigate a different type of model where the
potential, rather than the underlying mass distribution, is an
ellipsoid of constant shape and orientation. These ellipsoidal models
lend themselves to straightforward generalizations of simple, analytic
spherical models, and are therefore just as computationally efficient.
Moreover, we show that many cluster quantities derived assuming
spherical symmetry can be easily interpreted in terms of these
ellipsoidal models for a given shape and orientation.

The paper is organized as follows. In \S \ref{ep} we define the
ellipsoidal models. We show in \S \ref{mass} that the relationship
between the mass profile and potential for these models closely
resembles the spherical case, as does that between the mass profile
and the ICM density and temperature for the case of hydrostatic
equilibrium. In \S \ref{deproj} we provide several analytic
expressions for quantities associated with these ellipsoidal models,
in particular relating their deprojected spherical averages to their
intrinsic ellipsoidal profiles.  For practical use of these models
with binned data, in \S \ref{proj.ep} we generalize the traditional
spherical onion peeling deprojection method appropriate for a series
of concentric, triaxial ellipsoidal shells. Our conclusions are
presented in \S \ref{conc}. In Paper~2~\citep{buot11c} we perform a
detailed investigation of biases and scatter in the measurements of
global quantities resulting from the spherical averaging of
ellipsoidal galaxy clusters. Finally, we mention that the formulas we
present here (and the results in Paper~2) apply not just for massive
clusters but also for groups and massive elliptical galaxies with hot
gaseous halos.

\section{Ellipsoidal Potentials}
\label{ep}

Consider an ellipsoid with principal axes $a,b,c$ and axis ratios, $p_v
\equiv b/a$ and $q_v \equiv c/a$, satisfying $0 < q_v \le p_v \le
1$. When the ellipsoid is aligned so that $a$ lies along the $x$-axis,
$b$ along the $y$-axis, and $c$ along the $z$-axis, then the
ellipsoidal radius $a_v$ is given by,
\begin{equation}
a_v^2 = x^2 + \frac{y^2}{p_v^2} + \frac{z^2}{q_v^2}. \label{eqn.av} 
\end{equation}
We define an ellipsoidal gravitational potential (EP) to be an
ellipsoid of constant shape $(p_v,q_v)$ and orientation so that $\Phi$
depends only on $a_v$; i.e., $\Phi = \Phi(a_v)$. Since $\Phi$ has a
constant shape, EPs best approximate those clusters having nearly
constant ICM shapes. While it is well-known that individual clusters,
both those observed and formed in cosmological simulations, can
sometimes exhibit large radial shape variations, we note that the
average X-ray ellipticity profile for a low-redshift cluster sample
varies only weakly with radius~\citep[see Fig.\ 8 of][]{fang09a}.  In
contrast, the shapes of the isodensity surfaces of the underlying mass
distribution of an EP generally vary with radius.  This can be a
desirable feature since dark matter halos formed in cosmological
simulations typically have radially varying
shapes~\citep[e.g.,][]{jing02a,kaza04a,bail05a,allg06a,vera11a}.

Probably the key advantage of an EP is that a simple, analytic form
for $\Phi$ can be adopted based on a straightforward generalization of
a spherical potential, enabling much faster computational evaluation
than for ellipsoidal mass distributions; e.g., see Paper~2 for EP
generalizations of the NFW and isothermal $\beta$ models. A
disadvantage of an EP is that when the potential is sufficiently
flattened (the amount depending on how steep is the radial potential
profile), the mass density can become negative in some
region~\citep[e.g.,][]{binn81a,blan87a,schn92a,kass93a,evan94a,bt}. Requiring
a non-negative phase-space distribution function (DF) will further
restrict the flattening of the matter distribution that generates the
potential if the DF depends only on the energy and one-component of
the angular momentum~\citep{evan93a,evan94a}, although it is unclear
that the same restrictions would apply to a general three-integral DF.
Moreover, we show below (\S \ref{mass.gen}) that the mass enclosed
within $a_v$ has a form analogous to the mass enclosed within the
spherical radius $r$. It is therefore as well-behaved as the mass
profile of a spherical model, indicating that EPs should be suitable
for many applications. Indeed, in our previous X-ray studies of
elliptical galaxies~\citep{buot96a,buot02b} and a simulated
cluster~\citep{buot95a} we found very good agreement between the
gravitating mass profiles inferred using EPs with those obtained using
the far more computationally expensive ellipsoidal mass distributions.

\section{Mass Distribution}
\label{mass}

\subsection{General Case}
\label{mass.gen}

As noted previously, unlike $\Phi$ itself, the mass density
$\rho(x,y,z)$ of an EP is not an ellipsoid, and its typically
complicated form must be inferred from direct solution of Poisson's
Equation. However, $M(<a_v)$, the mass enclosed within $a_v$, is far
simpler to compute given $\Phi$, as we now show.

Gauss's Law for the mass enclosed within a surface $S$ is,
\begin{equation}
M_{\rm enc} = \frac{1}{4\pi \rm G}\oint_{S}
\nabla\Phi\cdot\hat{n}dS. \label{eqn.gauss}
\end{equation}
Now assuming $\Phi = \Phi(a_v)$ for an EP and taking $S$ to be the
ellipsoidal surface defined by $a_v$, the gradient and vector surface
element take the form,
\begin{equation}
\nabla\Phi(a_v) =\left(x\hat{i} +
\frac{y}{p_v^2}\hat{j} + \frac{z}{q_v^2}\hat{k}\right)
\frac{1}{a_v}\frac{d\Phi}{da_v} , \:\:\:\: {\rm and} \:\:\:\:
\hat{n}{dS} = \left(\frac{xq_v\hat{i}}{\sqrt{a_v^2 - x^2 - y^2/p_v^2}} +
\frac{yq_v\hat{j}}{p_v^2\sqrt{a_v^2 - x^2 - y^2/p_v^2}} +
\hat{k}\right)dxdy.  \label{eqn.gradient} 
\end{equation}
Substituting these expressions into eqn.\ (\ref{eqn.gauss}) and
rearranging into two separate surface integrals yields, 
\begin{equation}
M(<a_v) = \frac{1}{4\pi \rm G}\frac{q_v}{a_v}\frac{d\Phi}{da_v}\int\int
\frac{x^2\left(1- 1/q_v^2\right)
+ a_v^2/q_v^2}{\sqrt{a_v^2 - x^2 - y^2/p_v^2}}dxdy +
\frac{1}{4\pi \rm G}\frac{q_v}{a_v}\frac{d\Phi}{da_v}\int\int
\frac{y^2}{p_v^2}\left(\frac{1}{p_v^2} - \frac{1}{q_v^2}\right) 
\frac{1}{\sqrt{a_v^2 - x^2 - y^2/p_v^2}}dxdy. \label{eqn.gauss.long}
\end{equation}
Because of the ellipsoidal symmetry we may evaluate the integrals in
any quadrant of the $x-y$ plane. Multiplying the result obtained from
one quadrant by four gives the total surface integral in the positive
$z$-direction, and multiplying by eight gives the result for the total
ellipsoidal surface. By effecting a change of variable, $y =
up_v\sqrt{a_v^2 - x^2}$, the first term on the R.H.S.\ in eqn.\
(\ref{eqn.gauss.long}) simplifies to,
\begin{equation}
\frac{2}{\pi \rm G}\frac{p_vq_v}{a_v}\frac{d\Phi}{da_v} 
\int_0^{a_v}dx \left[x^2\left(1 - \frac{1}{q_v^2}\right) + \frac{a_v^2}{q_v^2}\right] 
\int_0^1 \frac{du}{\sqrt{1 - u^2}} =  \frac{a_v^2}{\rm G}\frac{d\Phi}{da_v} 
\frac{p_vq_v}{3}\left(1 + \frac{2}{q_v^2}\right). \label{eqn.gauss.first}
\end{equation}
Here the $u$-integral evaluates to $\pi/2$ so that the $x$-integral is
elementary. Similarly, making the same change of variable in the
second term on the R.H.S.\ in eqn.\ (\ref{eqn.gauss.long}) yields,
\begin{equation}
\frac{2}{\pi \rm G}\frac{p_vq_v}{a_v}\frac{d\Phi}{da_v}
\left(\frac{1}{p_v^2} - \frac{1}{q_v^2}\right)
\int_0^{a_v}\left(a_v^2 - x^2\right) dx \int_0^1 \frac{u^2du}{\sqrt{1 - u^2}} =   
\frac{a_v^2}{\rm G}\frac{d\Phi}{da_v} 
\frac{p_vq_v}{3}\left(\frac{1}{p_v^2} - \frac{1}{q_v^2}\right), \label{eqn.gauss.second}
\end{equation}
where this time the $u$-integral evaluates to $\pi/4$, again leading
to an elementary $x$-integral. Substituting the results from
equations (\ref{eqn.gauss.first}) and (\ref{eqn.gauss.second}) back
into eqn.\ (\ref{eqn.gauss.long}) gives the result for the enclosed
mass, which we state as a theorem:
\begin{thm} \label{thm.epmass}
For an EP the mass enclosed within the ellipsoid of radius $a_v$ is,
\begin{equation}
M(<a_v) = \eta (p_v,q_v) \frac{a_v^2}{\rm G}\frac{d\Phi}{da_v}, \label{eqn.mep}
\end{equation}
where we define the EP shape factor,
\begin{equation}
\eta (p_v,q_v) \equiv \frac{p_v q_v}{3}\left[1 + \frac{1}{p_v^2} +
\frac{1}{q_v^2} \right], \label{eqn.eta}
\end{equation}
which is of order unity. 
\end{thm}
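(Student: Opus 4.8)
The plan is essentially to reassemble the computation already set up above. Equations~(\ref{eqn.gauss.long})--(\ref{eqn.gauss.second}) have reduced the flux integral in Gauss's Law~(\ref{eqn.gauss}), evaluated over the equipotential ellipsoid on which $a_v$ is constant, to two elementary contributions, so the only remaining task is to substitute~(\ref{eqn.gauss.first}) and~(\ref{eqn.gauss.second}) back into~(\ref{eqn.gauss.long}) and collect terms. Doing so writes $M(<a_v)$ as $\frac{a_v^2}{{\rm G}}\frac{d\Phi}{da_v}$ multiplied by the sum of the two shape prefactors $\frac{p_vq_v}{3}(1+2/q_v^2)$ and $\frac{p_vq_v}{3}(1/p_v^2-1/q_v^2)$. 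The $1/q_v^2$ pieces partially cancel, leaving $\frac{p_vq_v}{3}(1+1/p_v^2+1/q_v^2)$, which is precisely $\eta(p_v,q_v)$ as defined in~(\ref{eqn.eta}); this gives~(\ref{eqn.mep}).

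The content that actually deserves care is not this final algebra but the two ingredients behind~(\ref{eqn.gauss.long}): first, parametrising the upper half of the ellipsoid as $z=q_v\sqrt{a_v^2-x^2-y^2/p_v^2}$ and reading off the vector area element $\hat{n}\,dS=(-z_x,-z_y,1)\,dx\,dy$, whose $p_v$, $q_v$, and square-root factors must be carried correctly through the dot product $\nabla\Phi\cdot\hat{n}\,dS$ with $\nabla\Phi(a_v)$ from~(\ref{eqn.gradient}); and second, the affine substitution $y=u\,p_v\sqrt{a_v^2-x^2}$, which simultaneously sends the elliptical cross-section to $0\le u\le 1$, reduces the denominator to $\sqrt{a_v^2-x^2}\,\sqrt{1-u^2}$, and decouples the $x$ and $u$ integrations. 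After that the $u$-integrals are the textbook Beta-function values $\int_0^1(1-u^2)^{-1/2}\,du=\pi/2$ and $\int_0^1 u^2(1-u^2)^{-1/2}\,du=\pi/4$, the residual $x$-integrals of $x^2$ and of constants over $[0,a_v]$ are elementary, and a factor of $8$ converts the single-octant result to the full closed surface. I would verify each of these factors explicitly, since this is the only place a slip could enter.

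For the closing remark that $\eta$ is of order unity, I would note that the spherical case $p_v=q_v=1$ gives $\eta=1$, so that~(\ref{eqn.mep}) reduces to the familiar $M(<r)=r^2\Phi'/{\rm G}$. Rewriting $\eta=\frac{1}{3}(p_vq_v+q_v/p_v+p_v/q_v)$ makes the behaviour transparent: by the arithmetic--geometric mean inequality $\eta\ge(p_vq_v)^{1/3}$, with equality only for a sphere, and for the mild axis ratios typical of clusters ($q_v$ not much below $\sim 0.3$) $\eta$ stays within a few tens of percent of unity; it does grow like $1/(3q_v)$ in the extreme oblate limit $q_v\to 0$ at fixed $p_v$, so ``of order unity'' should be read as a statement about realistic flattenings rather than a uniform bound. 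As an independent check on the algebra I would also recompute $M(<a_v)$ by the affine map $(x,y,z)\mapsto(x,p_vy',q_vz')$ that carries the ellipsoid to a ball of radius $a_v$, integrating $\nabla^2\Phi$ over that ball with the spherical averages $\langle x'^2/r'^2\rangle=\langle y'^2/r'^2\rangle=\langle z'^2/r'^2\rangle=1/3$; the same $\eta$ should drop out. The main obstacle here is therefore not conceptual but purely clerical: every $p_v$, $q_v$, sign, and numerical factor in the surface element and the two reduced integrals has to be tracked without error.
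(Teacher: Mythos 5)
Your proposal is correct and follows essentially the same route as the paper: Gauss's Law over the equipotential ellipsoid with the surface element of eqn.~(\ref{eqn.gradient}), the substitution $y=u\,p_v\sqrt{a_v^2-x^2}$ giving the $u$-integrals $\pi/2$ and $\pi/4$, and then substituting eqns.~(\ref{eqn.gauss.first}) and (\ref{eqn.gauss.second}) back into eqn.~(\ref{eqn.gauss.long}) so that the two prefactors combine to $\eta(p_v,q_v)$. Your added remarks (the octant factor of $8$, the spherical limit, and the caveat that $\eta$ diverges as $q_v\to 0$ so ``order unity'' holds only for realistic flattenings) are accurate but supplementary.
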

Hence, the relation between $M(<a_v)$ and $\Phi(a_v)$ is nearly
identical in form to the spherical case, $M(<r) = (r^2/{\rm
G})d\Phi/dr$, to which it reduces for $p_v=q_v=1$. Here it is worth
emphasizing that the axis ratios $p_v$ and $q_v$ define the shape of
$\Phi$ and the bounding surface for the calculation of $M(<a_v)$, but
generally the axis ratios of the corresponding mass density
distribution are smaller (e.g., see Fig.\ 1 of Paper~2).

The spherically averaged mass distribution of an EP does not possess
such a simple relationship to the potential. Nevertheless, it can be
computed without resorting to the evaluation of the volume integral of
the generally complicated density distribution.
\begin{thm} \label{thm.epspmass}
For an EP the mass enclosed within the sphere of radius $r$ is,
\begin{equation}
M(<r) = \frac{r}{4\pi \rm G}\int_{4\pi}a_v\frac{d\Phi (a_v)}{da_v}d{\rm
\Omega} = \frac{r}{4\pi \eta (p_v,q_v)}\int_{4\pi}\frac{M(<a_v)}{a_v}d{\rm
\Omega},
\end{equation}
where $d{\Omega}=\sin\theta d\theta d\phi$ is the solid angle and the
integration proceeds over the entire spherical surface.
\end{thm}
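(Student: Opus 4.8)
The plan is to re-apply Gauss's Law, eqn.~(\ref{eqn.gauss}), but now taking the bounding surface $S$ to be the sphere of radius $r$ rather than the ellipsoidal surface used in \S\ref{mass.gen}. The advantage of this choice is that the outward normal is simply $\hat{n}=\hat{r}$ and $dS=r^2\,d\Omega$, so the flux integrand no longer carries the awkward position-dependent factors that appeared in eqn.~(\ref{eqn.gradient}) and necessitated the lengthy reductions leading to Theorem~\ref{thm.epmass}. Concretely, I would start from the gradient already derived, $\nabla\Phi(a_v)=\left(x\hat{i}+y\hat{j}/p_v^2+z\hat{k}/q_v^2\right)a_v^{-1}\,d\Phi/da_v$, valid at every point $(x,y,z)$, and dot it into $\hat{r}=(x,y,z)/r$. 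This gives $\nabla\Phi\cdot\hat{r}=\left(x^2+y^2/p_v^2+z^2/q_v^2\right)(r\,a_v)^{-1}\,d\Phi/da_v$, and the key observation is that the bracketed numerator is exactly $a_v^2$ by the definition~(\ref{eqn.av}); hence $\nabla\Phi\cdot\hat{r}=(a_v/r)\,d\Phi/da_v$, where $a_v=a_v(\theta,\phi)$ is the ellipsoidal radius of the point on the sphere lying in the direction $(\theta,\phi)$.

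Substituting this into Gauss's Law yields $M(<r)=(4\pi{\rm G})^{-1}\int_{4\pi}(a_v/r)(d\Phi/da_v)\,r^2\,d\Omega=(r/4\pi{\rm G})\int_{4\pi}a_v\,(d\Phi/da_v)\,d\Omega$, which is the first equality. The second equality is then immediate from Theorem~\ref{thm.epmass}: solving eqn.~(\ref{eqn.mep}) for the potential gradient gives $a_v\,d\Phi/da_v={\rm G}\,M(<a_v)/[\eta(p_v,q_v)\,a_v]$, and since $\eta$ depends only on the fixed shape $(p_v,q_v)$ it is a constant that can be taken outside the angular integral, producing $M(<r)=(r/4\pi\eta)\int_{4\pi}M(<a_v)/a_v\,d\Omega$.

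I do not anticipate a genuine obstacle here — the argument is in fact shorter than that of Theorem~\ref{thm.epmass}, precisely because a spherical $S$ diagonalizes the flux integrand. The only points worth stating carefully are: (i) the validity of Gauss's Law in the form~(\ref{eqn.gauss}) for this $S$ rests only on the divergence theorem applied to $\nabla\Phi$ together with Poisson's equation, and so is unaffected by the fact that the EP's mass density is non-ellipsoidal and may be negative in places; and (ii) inside the integral $a_v$ is not constant over the sphere but varies with direction as $a_v^2=r^2(\sin^2\theta\cos^2\phi+\sin^2\theta\sin^2\phi/p_v^2+\cos^2\theta/q_v^2)$, which is exactly why the angular integral does not collapse to a single term and the spherically averaged mass loses the clean proportionality to $d\Phi/da_v$ enjoyed by $M(<a_v)$ in eqn.~(\ref{eqn.mep}).
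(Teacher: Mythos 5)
Your proposal is correct and follows essentially the same route as the paper's own proof: apply Gauss's Law over the sphere of radius $r$, use $\nabla\Phi(a_v)\cdot\hat{r}\,r^2 = a_v\,r\,(d\Phi/da_v)$ (since $x^2+y^2/p_v^2+z^2/q_v^2=a_v^2$), and then invoke Theorem~\ref{thm.epmass} to rewrite $a_v\,d\Phi/da_v$ in terms of $M(<a_v)/(\eta a_v)$ for the second equality. Your additional remarks (validity of Gauss's Law despite the non-ellipsoidal, possibly negative density, and the angular dependence of $a_v$ on the sphere) are sound clarifications but do not change the argument.
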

\begin{proof} 
Again using Gauss's Law (eqn.\ \ref{eqn.gauss}), but this time for a
spherical surface, gives,
\begin{equation}
\hat{n}{dS} = \hat{\rm r} {r^2d\Omega} = \left( x\hat{i} +
y\hat{j} + z\hat{k}\right)r d\Omega,
\end{equation}
for the vector surface element. Taking the dot product with $\nabla\Phi(a_v)$
(eqn.\ \ref{eqn.gradient}) and integrating over the sphere gives the
stated result, where the R.H.S.\ made use of Theorem
\ref{thm.epmass}.  
\end{proof}

\subsection{Hydrostatic Equilibrium}
\label{mass.he}

Since a primary goal of our study is to investigate the effect of
spherical averaging on the inferred mass distribution from X-ray
observations, we need to consider the case where the hot intracluster
medium (ICM), or ``hot gas'', is in hydrostatic equilibrium. We assume
the self-gravity of the gas can be neglected $(\Phi_{\rm gas} \ll
\Phi))$ which is generally a very good approximation interior to
$r_{500}$ where the cluster gas fraction is $\sim
10\%$~\citep[e.g.,][]{prat10a}. In this case, the equation of
hydrostatic equilibrium,
\begin{equation}
\nabla P_{\rm gas} = -\rho_{\rm gas}\nabla\Phi, \label{eqn.he}
\end{equation}
where $P_{\rm gas}$ is the thermal pressure and $\rho_{\rm gas}$ is
the density of the ICM, requires that surfaces of constant potential
are the same as surfaces of constant ICM pressure, density,
temperature, and, so long as the metal abundances do not vary over
these surfaces, it follows that the surfaces of constant X-ray
emissivity also follow the potential~\citep[``X-ray Shape
Theorem,''][]{buot94,buot96a,buot11a}. For the special case of an EP
model it follows that all ICM quantities depend only on $a_v$; e.g.,
$P_{\rm gas} = P_{\rm gas}(a_v)$, $\rho_{\rm gas} = \rho_{\rm
gas}(a_v)$, and $T = T(a_v)$, so that the hydrostatic equation
becomes,
\begin{equation} 
\frac{dP_{\rm gas}(a_v)}{da_v} = -\rho_{\rm gas}(a_v)\frac{d\Phi
(a_v)}{da_v}, \label{eqn.he.av}
\end{equation}
where we have used the definition of the gradient in eqn.\
(\ref{eqn.gradient}). Using Theorem \ref{thm.epmass} and substituting
the ideal gas equation of state for the pressure, $P_{\rm gas} =
\rho_{\rm gas} k_B T/(\mu m_{\rm a})$, where $k_B$ is Boltzmann's
constant, $m_{\rm a}$ is the atomic mass unit, and $\mu$ is the mean
atomic weight of the gas, we obtain the following
result.
\begin{thm} \label{thm.trad}
For an EP in hydrostatic equilibrium the mass enclosed within an
ellipsoid of radius $a_v$ is,
\begin{equation}
M(<a_v) = -\eta (p_v,q_v)\left[ \frac{a_v{\rm k_B}T}{\mu\rm m_aG}\right] \left[
\frac{d\ln\rho_{\rm gas}}{d\ln a_v} + \frac{d\ln T}{d\ln a_v} \right],
\label{eqn.trad.ep}
\end{equation}
where $\eta$ is given by eqn.\ (\ref{eqn.eta}). 
\end{thm}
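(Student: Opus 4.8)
The plan is to obtain $M(<a_v)$ by eliminating the potential gradient $d\Phi/da_v$ between Theorem~\ref{thm.epmass} and the hydrostatic condition, then converting to logarithmic derivatives. First I would take eqn.~(\ref{eqn.mep}), $M(<a_v) = \eta(p_v,q_v)(a_v^2/{\rm G})\,d\Phi/da_v$, as the starting point, so that the task reduces to expressing $d\Phi/da_v$ in terms of ICM observables. For this I would invoke eqn.~(\ref{eqn.he.av}), the one-dimensional form of hydrostatic equilibrium along $a_v$ valid for an EP, which holds because --- as argued in the text preceding the theorem --- constant-$\Phi$ surfaces coincide with surfaces of constant $P_{\rm gas}$, $\rho_{\rm gas}$ and $T$ (the ``X-ray Shape Theorem''), so that every ICM quantity is a function of $a_v$ alone. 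Solving eqn.~(\ref{eqn.he.av}) for the potential gradient gives $d\Phi/da_v = -\rho_{\rm gas}^{-1}\,dP_{\rm gas}/da_v$.

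Next I would substitute the ideal-gas equation of state $P_{\rm gas} = \rho_{\rm gas}{\rm k_B}T/(\mu{\rm m_a})$ and differentiate the product, obtaining
\begin{equation}
\frac{d\Phi}{da_v} = -\frac{{\rm k_B}T}{\mu{\rm m_a}}\left(\frac{1}{\rho_{\rm gas}}\frac{d\rho_{\rm gas}}{da_v} + \frac{1}{T}\frac{dT}{da_v}\right).
\end{equation}
Writing each term as a logarithmic derivative, $\rho_{\rm gas}^{-1}\,d\rho_{\rm gas}/da_v = a_v^{-1}\,d\ln\rho_{\rm gas}/d\ln a_v$ and similarly for $T$, and inserting the result into eqn.~(\ref{eqn.mep}) yields eqn.~(\ref{eqn.trad.ep}), the factor $a_v^2$ being reduced to $a_v$ by the $1/a_v$ produced in the logarithmic rewriting. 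The shape factor $\eta(p_v,q_v)$ of eqn.~(\ref{eqn.eta}) carries through unchanged.

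There is no genuine obstacle here: once Theorem~\ref{thm.epmass} is in hand and the reduction of hydrostatic balance to the single variable $a_v$ has been justified, the derivation is a short algebraic manipulation formally identical to the spherical case (to which it reduces for $p_v=q_v=1$, where $\eta=1$). The only point that deserves care --- and which is supplied by the discussion around eqns.~(\ref{eqn.he})--(\ref{eqn.he.av}) rather than by the proof itself --- is the assumption that the metal abundances are constant on the potential surfaces and that gas self-gravity is negligible, so that the ICM is genuinely stratified on the ellipsoids of constant $a_v$; granting that, the statement follows immediately.
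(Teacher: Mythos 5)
Your proposal is correct and follows the same route as the paper: combining eqn.\ (\ref{eqn.he.av}) with Theorem~\ref{thm.epmass} to eliminate $d\Phi/da_v$, substituting the ideal gas equation of state, and rewriting the derivatives in logarithmic form. The supporting assumptions you flag (ICM stratification on surfaces of constant $a_v$, negligible gas self-gravity) are likewise exactly those supplied in the paper's discussion preceding the theorem.
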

This result is a simple generalization of the spherical case where $r$
is replaced by $a_v$ and the mass is multiplied by the shape factor
$\eta (p_v,q_v)$. Similarly, solutions of eqn.\ (\ref{eqn.he.av}) for
$\rho_{\rm gas}$, $T$, and the entropy are also easily constructed
from their spherical counterparts by making the transformation
$r\rightarrow a_v$ and $M(<r)\rightarrow M(<a_v)/\eta$~\citep[e.g.,
see equations 10-12 of][]{buot11a}.

\section{Deprojection Relations for Spherically Averaged Quantities}
\label{deproj}

Here we derive analytical expressions for the spherically averaged
deprojection of intrinsically ellipsoidal quantities, with a
particular application to the EPs.  We begin by summarizing the
general results for the projection of a triaxial ellipsoid. 

\subsection{Preliminaries}

We define the orientation of the ellipsoidal system
following~\citet{binn85}. In the reference $(x,y,z)$ coordinate system
the principal axes of the ellipsoid are aligned with the coordinate
directions as described in the definition of $a_v$ (eqn.\
\ref{eqn.av}). Define a new rotated coordinate system
$(x^{\prime},y^{\prime} ,z^{\prime} )$ where the $z^{\prime}$ axis
lies along the line-of-sight to the observer from the center of the
ellipsoid and the $x^{\prime}$ axis is located in the $(x,y)$ plane.
The two systems are related by starting with their axes aligned,
rotating the reference system first by an angle $\phi$ about the
$z^{\prime}$ axis and then by an angle $\theta$ about the $x^{\prime}$
axis. (For the rotation matrix, see \citealt{binn85}.)

For any quantity that depends only on the ellipsoidal radius $a_v$,
\citet{cont56} and \citet{star77} showed that the projection of this
ellipsoidal quantity yields a (two-dimensional) elliptical
distribution of constant shape and orientation that depends only on
the elliptical coordinate $a_s$ on the sky; i.e., the projection of
the ellipsoidal volume emissivity $\epsilon(a_v)$ (i.e., luminosity
per volume) yields an elliptical surface brightness $\Sigma(a_s)$,
\begin{equation}
\Sigma(a_s) = \frac{2}{\sqrt{f}}\int_{a_s}^{\infty}
\frac{\epsilon(a_v)a_vda_v}{\sqrt{a_v^2 - a_s^2}}, \label{eq.ep.proj}
\end{equation}
where,
\begin{equation}
f = \sin^2\theta\left(\cos^2\phi + \frac{\sin^2\phi}{p_v^2}\right) +
\frac{\cos^2\theta}{q_v^2}, \label{eqn.f}
\end{equation}
using the angle definitions described above, and where $p_v$ and $q_v$
define $a_v$. The elliptical coordinate variable $a_s=\gamma_s
q_s\alpha$ is proportional to the elliptical radius $\alpha$ (i.e.,
semi-major axis on the sky), which is of more immediate interest to the observer,
\begin{equation}
\alpha^2 = X^2 + \frac{Y^2}{q_s^2}, \label{eqn.alpha}
\end{equation}
where $q_s$ is the axial ratio of $\Sigma$, and $X$ and $Y$ are sky
coordinates aligned with the isophotal major and minor axes
respectively\footnote{To convert our notation to that used by
\citet{star77} for the projected quantities, let $\gamma_s\rightarrow
\alpha$, $q_s\rightarrow \beta$, $\alpha\rightarrow q$, and
$X\Leftrightarrow Y$.}. The proportionality factors are given by,
\begin{eqnarray}
\gamma_s^2 & = &\frac{1}{2f}\left[\left(\rm A+C\right) + \sqrt{\left(\rm
A-C\right)^2 + \rm B^2}\right],\\
q_s^2 & = & \frac{\left(\rm A+C\right) - \sqrt{\left(\rm
A-C\right)^2 + \rm B^2}}{\left(\rm A+C\right) + \sqrt{\left(\rm
A-C\right)^2 + \rm B^2}} = \left(\sqrt{f}\gamma_s^2 p_vq_v\right)^{-2}, \label{eqn.qs}
\end{eqnarray}
which were derived by \citet{star77}, though also see \citet{bing80a} and \citet{binn85}, where
\begin{eqnarray}
A & = & \frac{\cos^2 \theta}{q_v^2}\left(\sin^2\phi +
\frac{\cos^2\phi}{p_v^2}\right) + \frac{\sin^2\theta}{p_v^2},\\
B & = & \cos\theta\sin 2\phi\left(1
-\frac{1}{p_v^2}\right)\frac{1}{q_v^2},\\
C & = & \left(\frac{\sin^2\phi}{p_v^2} + \cos^2\phi\right)\frac{1}{q_v^2}.
\end{eqnarray}

\subsection{General Case}

\begin{defn} \label{circavg}
For an elliptical distribution $\Sigma(\alpha)$, where $\alpha$ is the
elliptical radius with axial ratio $q_s$, we define the ``effective
circular average'' $\langle \Sigma(R)\rangle$ by associating the
circular radius with the geometric mean radius: $\langle
\Sigma(R)\rangle = \Sigma(\alpha)$, where $R=\alpha\sqrt{q_s}$.
\end{defn}

The use of such an effective circular average is required for
obtaining the analytical relations we describe below. We note that for
typical models (i.e., those considered in Paper~2) the effective
circular average usually very closely approximates a formal azimuthal
integration of $\Sigma(\alpha)$ at fixed $R$. Only for very flattened
models that fall steeply with radius do we find notable differences.
We emphasize, however, that for our purposes it is only necessary that
the observer employs the effective circular average in their analysis
(as is a very common practice).

\begin{defn} \label{sphdep}
For any quantity intrinsically distributed as an
ellipsoid stratified on surfaces of constant ellipsoidal radius $a_v$,
let $\langle\cdots\rangle^{\rm d}$ represent the deprojected,
spherically averaged version of that quantity defined in the following
manner. 
For illustration consider the specific case of $\epsilon(a_v)$ and its
projection $\Sigma(a_s)$ as defined previously.  The spherical
averaging considered here begins by first taking the effective
circular average of $\Sigma(a_s)$ (i.e., Definition~\ref{circavg} and
using the fact that $a_s\propto\alpha$) to give the radial profile,
$\langle\Sigma(R)\rangle$, where $R$ is the radius on the sky. Then
$\langle\Sigma(R)\rangle$ is deprojected by assuming spherical
symmetry, which yields the spherically averaged volume emissivity
denoted by, $\langle\epsilon(r)\rangle^{\rm d}$.
\end{defn}

\begin{defn} \label{sphdep.comp}
For any quantity that is composed of one or more deprojected
spherically averaged quantities as defined in Definition~\ref{sphdep},
but itself does not satisfy Definition~\ref{sphdep}, we shall also
refer to this composite quantity as a ``deprojected spherical
average'' but add a plus sign in the notation to distinguish it from
Definition~\ref{sphdep}; i.e., $\langle\cdots\rangle^{\rm d+}$.
\end{defn}

\begin{thm} \label{thm.general}
For any ellipsoidal distribution that depends only on
the ellipsoidal radius $a_v$, the deprojected spherical average
(Definition~\ref{sphdep}) of this distribution is,
\begin{equation}
\langle\epsilon(r)\rangle^{\rm d}  = 
\left(\gamma_s q_s^{\frac{1}{2}} f^{-\frac{1}{2}}\right)
\epsilon(a_v), \;{\rm where} \; a_v = (\gamma_s\sqrt{q_s})r.
\end{equation} 
\end{thm}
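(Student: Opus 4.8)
The plan is to chain together the projection formula for ellipsoidal quantities (eqn.~\ref{eq.ep.proj}) with the spherical (Abel) deprojection of the effective circular average, and then match the two parametrizations of the radial variable. First I would start from the elliptical surface brightness $\Sigma(a_s)$ produced by $\epsilon(a_v)$, recalling from the preliminaries that $a_s = \gamma_s q_s \alpha$ with $\alpha$ the semi-major axis on the sky. Applying Definition~\ref{circavg}, the effective circular average replaces $\alpha$ by $R/\sqrt{q_s}$, so that $\langle\Sigma(R)\rangle$ is just $\Sigma$ evaluated at $a_s = \gamma_s \sqrt{q_s}\,R$. The key observation is that $\Sigma$, as a function of this single combined variable, has \emph{exactly} the form of a spherical (circular) Abel projection of the profile $g(u) \equiv (2/\sqrt f)\,\epsilon(u)$ once we rescale lengths by the constant $k \equiv \gamma_s\sqrt{q_s}$: writing eqn.~(\ref{eq.ep.proj}) with the substitution $a_v = k\,\rho$, $a_s = k\,R$, the Jacobian factors of $k$ cancel between $a_v\,da_v$ and $\sqrt{a_v^2-a_s^2}$ up to one overall power, leaving $\langle\Sigma(R)\rangle = k\int_R^\infty g(k\rho)\,\rho\,d\rho/\sqrt{\rho^2-R^2}$ with $g$ as above — structurally identical to the spherical projection integral, modulo the constant prefactor.

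Next I would invert. The observer, by hypothesis (Definition~\ref{sphdep}), deprojects $\langle\Sigma(R)\rangle$ \emph{assuming spherical symmetry}, i.e.\ applies the standard inverse Abel transform
\begin{equation}
\langle\epsilon(r)\rangle^{\rm d} = -\frac{1}{\pi}\int_r^\infty \frac{d\langle\Sigma(R)\rangle}{dR}\frac{dR}{\sqrt{R^2-r^2}}.
\end{equation}
Because the forward map above is (up to the constant $k$ and the constant $2/\sqrt f$) the ordinary circular projection of $g(k\rho)$ expressed in the rescaled radius, the inverse Abel transform returns $g$ evaluated at the rescaled radius times the appropriate constants: unwinding the substitution gives $\langle\epsilon(r)\rangle^{\rm d}$ proportional to $\epsilon(a_v)$ with $a_v = k r = \gamma_s\sqrt{q_s}\,r$. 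The only remaining task is to pin down the overall constant. Carefully tracking the factor $2/\sqrt f$ from eqn.~(\ref{eq.ep.proj}), the $\sqrt{q_s}$ from the effective-circular-average substitution $R=\alpha\sqrt{q_s}$, and the factor $\gamma_s q_s$ relating $a_s$ to $\alpha$, together with the $1/\pi$ and the Jacobians in the two Abel integrals, one should recover precisely the prefactor $\gamma_s q_s^{1/2} f^{-1/2}$ stated in the theorem. It is cleanest to do this bookkeeping by comparing against the degenerate spherical case $p_v=q_v=1$ (whence $f=1$, $\gamma_s=q_s=1$ and the prefactor is $1$, as it must be), and then restoring the shape-dependent factors one at a time.

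The main obstacle I anticipate is purely the constant-chasing: there are several competing conventions bundled into the symbols $\gamma_s$, $q_s$, $a_s$ and $\alpha$ (note the author even needs a footnote to reconcile notation with \citealt{star77}), so the risk is an off-by-$\sqrt{q_s}$ or off-by-$f^{-1/2}$ slip. I would guard against this by (i) writing the forward projection in the rescaled variable and confirming it is \emph{manifestly} the spherical Abel kernel, so that the inverse is automatic, and (ii) checking the prefactor against both the spherical limit and against the known consistency relation $q_s^2 = (\sqrt f\,\gamma_s^2 p_v q_v)^{-2}$ from eqn.~(\ref{eqn.qs}), which lets one re-express the answer and catch algebra errors. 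No genuinely hard analysis is involved — the content is entirely that projection of an $a_v$-stratified ellipsoid is, in the effective-circular-average sense, a rescaled spherical projection, so Abel inversion commutes with the rescaling.
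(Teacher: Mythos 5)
Your proposal is correct and follows essentially the same route as the paper: both chain the Contopoulos/Stark projection formula with the geometric-mean rescaling $a_s=\gamma_s\sqrt{q_s}\,R$ and the observer's spherical Abel inversion, the paper comparing the two inverse Abel integrals while you equivalently rewrite the forward projection as a rescaled spherical projection so the inversion is immediate. Note that your displayed relation $\langle\Sigma(R)\rangle = k\int_R^\infty g(k\rho)\,\rho\,d\rho/\sqrt{\rho^2-R^2}$ with $g=(2/\sqrt{f})\,\epsilon$ already identifies the deprojected profile as $(k/\sqrt{f})\,\epsilon(kr)=\gamma_s q_s^{1/2}f^{-1/2}\,\epsilon(a_v)$, so the remaining ``constant-chasing'' you anticipate is in fact already complete.
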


\begin{proof}
We start with $\epsilon(a_v)$ and use eqn.\ (\ref{eq.ep.proj}) to
compute $\Sigma(a_s)$. From Definition~\ref{circavg} the effective
circular average of $\Sigma(a_s)$ is obtained by associating the
circular radius $R$ with the geometric mean radius of the ellipse of
semi-major axis $\alpha$, $R = \alpha\sqrt{q_s}$.  Since the
ellipsoidal coordinate $a_s$ is a function of the elliptical radius,
$a_s(\alpha)=\gamma_s q_s\alpha$, it follows that,
\begin{equation}
\langle\Sigma (R)\rangle = \Sigma (a_s(\alpha)) = \Sigma
(a_s(\frac{R}{\sqrt{q_s}})) = \Sigma (\gamma_s \sqrt{q_s} R),
\end{equation}
which relates the circular distribution on the L.H.S.\ to the
elliptical distribution on the R.H.S. Since $\langle\Sigma (R)\rangle$
is circularly symmetric, it may be spherically deprojected using the inverse Abel
integral relation,
\begin{equation}
\langle\epsilon(r)\rangle^{\rm d}  =  -\frac{1}{\pi}
\int_r^{\infty} \frac{d \langle\Sigma (R)\rangle}{dR} \frac{dR}{\sqrt{
R^2 - r^2}}.
\end{equation}
Changing the integration variable from $R$ to $a_s$ yields,
\begin{equation}
\langle\epsilon(r)\rangle^{\rm d}  =  -\frac{\gamma_s \sqrt{q_s}}{\pi}
\int_{a_v}^{\infty} \frac{d \Sigma (a_s)}{da_s} \frac{da_s}{\sqrt{
a_s^2 - a_v^2}}, \label{eqn.abel1}
\end{equation}
where $a_v = (\gamma_s \sqrt{q_s}) r$. Since eqn.\ (\ref{eq.ep.proj})
is an Abel integral, its inverse is readily obtained~\citep[e.g.,
eqn.\ 13 of][]{star77},
\begin{equation}
\epsilon(a_v)  =  -\frac{\sqrt{f}}{\pi}
\int_{a_v}^{\infty} \frac{d \Sigma (a_s)}{da_s} \frac{da_s}{\sqrt{
a_s^2 - a_v^2}}. \label{eqn.abel2}
\end{equation}
Comparing equations (\ref{eqn.abel1}) and (\ref{eqn.abel2}) gives the
desired result.
\end{proof}

\subsection{X-Ray Emission and Hydrostatic Equilibrium}

Here we consider cluster properties associated with the ICM X-ray
emission. We assume that all volume ICM properties depend only on
ellipsoidal radius $a_v$, which applies exactly for hydrostatic
equilibrium in an EP (see \S \ref{mass.he}). However, hydrostatic
equilibrium is strictly required below only for Theorems~\ref{thm.m}
and~\ref{thm.scalefree} and Corollary~\ref{coro.fgas}.

\begin{thm} \label{thm.temp}
The deprojected spherical average of the emission-weighted temperature is,
\begin{equation}
\langle T(r)\rangle^{\rm d+} =  T(a_v), \;{\rm where} \; a_v = (\gamma_s\sqrt{q_s})r.
\end{equation}
\end{thm}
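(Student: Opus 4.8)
The plan is to reduce the statement to two applications of Theorem~\ref{thm.general} followed by a cancellation. The emission-weighted temperature that an observer records at a point on the sky is, by definition, the line-of-sight integral of a weight times the temperature divided by the line-of-sight integral of the weight alone; for X-ray spectroscopy the natural weight is the emissivity $\epsilon(a_v)$ (a narrow-band emission measure or a spectroscopic-like weight would do equally well, provided it is a function of $a_v$ only under our standing assumption that all volume ICM properties are stratified on the surfaces of constant $a_v$). First I would note that $\epsilon(a_v)$, $T(a_v)$, and hence the product $\epsilon(a_v)T(a_v)$ are all stratified on the ellipsoidal surfaces of fixed $a_v$, so the projection formula~(\ref{eq.ep.proj}) applies to both $\epsilon$ and $\epsilon T$ with the \emph{same} geometric factor $f$ and the \emph{same} projected elliptical coordinate $a_s$. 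Forming the ratio, the common factor $2/\sqrt{f}$ cancels and the projected emission-weighted temperature is a function of $a_s$ alone,
\begin{equation}
T_\Sigma(a_s) = \frac{\displaystyle\int_{a_s}^{\infty}\frac{\epsilon(a_v)T(a_v)\,a_v\,da_v}{\sqrt{a_v^2-a_s^2}}}{\displaystyle\int_{a_s}^{\infty}\frac{\epsilon(a_v)\,a_v\,da_v}{\sqrt{a_v^2-a_s^2}}}.
\end{equation}

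Next I would pass to the spherically averaged quantity following Definition~\ref{sphdep.comp}. Because the effective circular average (Definition~\ref{circavg}) acts here merely by the argument substitution $a_s\to\gamma_s\sqrt{q_s}\,R$, it commutes with the ratio, so $\langle T_\Sigma(R)\rangle$ equals the circular average of the projection of $\epsilon T$ divided by the circular average of the projection of $\epsilon$. Deprojecting numerator and denominator separately under the spherical assumption, which are exactly the two Abel inversions carried out in the proof of Theorem~\ref{thm.general}, gives
\begin{equation}
\langle(\epsilon T)(r)\rangle^{\rm d} = \left(\gamma_s q_s^{\frac{1}{2}} f^{-\frac{1}{2}}\right)\epsilon(a_v)T(a_v), \qquad \langle\epsilon(r)\rangle^{\rm d} = \left(\gamma_s q_s^{\frac{1}{2}} f^{-\frac{1}{2}}\right)\epsilon(a_v),
\end{equation}
both with $a_v=(\gamma_s\sqrt{q_s})\,r$. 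Taking the ratio as prescribed by Definition~\ref{sphdep.comp}, the prefactor $\gamma_s q_s^{1/2} f^{-1/2}$ and the common $\epsilon(a_v)$ cancel, leaving $\langle T(r)\rangle^{\rm d+}=T(a_v)$ with $a_v=(\gamma_s\sqrt{q_s})\,r$, which is the claim; the plus sign in $\langle\cdots\rangle^{\rm d+}$ simply records (per Definition~\ref{sphdep.comp}) that this is a composite of two Definition~\ref{sphdep} objects rather than one itself.

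The only point needing care --- and the nearest thing to an obstacle --- is the bookkeeping of the weighting: one must confirm that whatever weight the observer uses is itself a function of $a_v$ only (which holds under the hypothesis that all volume ICM properties are stratified on the ellipsoidal surfaces), so that Theorem~\ref{thm.general} applies verbatim to both the weight and the weight times $T$. Once that is granted the two geometric prefactors are identical and cancel with no further computation. It is worth noting, finally, that hydrostatic equilibrium is \emph{not} invoked anywhere in this argument --- unlike in Theorems~\ref{thm.m} and~\ref{thm.scalefree} --- since only the stratification of $\epsilon$ and $T$ on surfaces of constant $a_v$ is used.
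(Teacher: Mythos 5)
Your proof is correct and follows essentially the same route as the paper's: the emission-weighted temperature is written as the ratio $\langle(\epsilon T)(r)\rangle^{\rm d}/\langle\epsilon(r)\rangle^{\rm d}$, Theorem~\ref{thm.general} is applied to both numerator and denominator (both being functions of $a_v$ alone), and the common factor $\gamma_s q_s^{1/2}f^{-1/2}$ together with $\epsilon(a_v)$ cancels to leave $T(a_v)$ with $a_v=(\gamma_s\sqrt{q_s})r$. The extra remarks on the weighting being stratified on $a_v$ and on hydrostatic equilibrium not being needed are consistent with the paper's own statements.
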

\begin{proof}
Let $\epsilon$ be the X-ray emissivity and $T$ the gas
temperature. The emission-weighted temperature is defined as the
volume integral of $\epsilon T$ divided by the volume integral of
$\epsilon$. Hence, at any radius, 
\begin{equation}
\langle T(r)\rangle^{\rm d+} =  { \langle (\epsilon
T)(r)\rangle^{\rm d} \over \langle \epsilon (r)\rangle^{\rm d}}
=  \frac{\left(\gamma_s q_s^{\frac{1}{2}} f^{-\frac{1}{2}}\right)
\epsilon (a_v) T (a_v)}{ \left(\gamma_s q_s^{\frac{1}{2}}
f^{-\frac{1}{2}}\right)\epsilon (a_v)}
=  T (a_v), 
\end{equation}
where $a_v = (\gamma_s\sqrt{q_s})r,$ and we made use of Theorem
\ref{thm.general} in both the numerator and denominator.
Note that the deprojected spherical averages in the numerator and
denominator each correspond to Definition \ref{sphdep}, while the
composite result $\langle T(r)\rangle^{\rm d+}$ corresponds to
Definition \ref{sphdep.comp}. Similar usages apply below.
\end{proof}

\begin{thm} \label{thm.rhog}
The deprojected spherical average of the gas density is,
\begin{equation}
\langle\rho_{\rm gas}(r)\rangle^{\rm d+} =
\left(\gamma_s^{\frac{1}{2}} q_s^{\frac{1}{4}} f^{-\frac{1}{4}}\right) \rho_{\rm gas}(a_v), \;{\rm where} \; a_v = (\gamma_s\sqrt{q_s})r.
\end{equation}
\end{thm}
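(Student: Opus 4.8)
The plan is to exploit the fact that the gas density is not itself an X-ray observable but is reconstructed by the observer from the deprojected emissivity and temperature, so that $\langle\rho_{\rm gas}(r)\rangle^{\rm d+}$ is a composite quantity in the sense of Definition~\ref{sphdep.comp}. The starting point is the standard relation between the ICM X-ray emissivity and the gas density, $\epsilon(a_v) \propto \rho_{\rm gas}^2(a_v)\,\Lambda\!\left(T(a_v),Z(a_v)\right)$, where $\Lambda$ is the cooling function and, by the assumptions of \S\ref{mass.he}, both $T$ and the metal abundances are stratified on the same ellipsoidal surfaces as $\epsilon$ and $\rho_{\rm gas}$. Inverting, $\rho_{\rm gas}(a_v) = C\,\sqrt{\epsilon(a_v)/\Lambda(T(a_v))}$ for a constant $C$ absorbing the composition-dependent prefactors.

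First I would write down the operational definition of the composite average: the observer forms $\langle\rho_{\rm gas}(r)\rangle^{\rm d+} = C\,\sqrt{\langle\epsilon(r)\rangle^{\rm d}/\Lambda\!\left(\langle T(r)\rangle^{\rm d+}\right)}$, i.e.\ the density is reconstructed from the separately deprojected spherical averages of emissivity (Definition~\ref{sphdep}) and emission-weighted temperature (Definition~\ref{sphdep.comp}). Next I would substitute Theorem~\ref{thm.general}, which gives $\langle\epsilon(r)\rangle^{\rm d} = \big(\gamma_s q_s^{1/2} f^{-1/2}\big)\epsilon(a_v)$, and Theorem~\ref{thm.temp}, which gives $\langle T(r)\rangle^{\rm d+} = T(a_v)$, both with $a_v = (\gamma_s\sqrt{q_s})r$. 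Because the temperature average is \emph{exactly} $T(a_v)$, the cooling function in the denominator equals $\Lambda(T(a_v))$, precisely the value appearing in the true $\rho_{\rm gas}(a_v)$, so $\Lambda$ and the constant $C$ cancel.

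The remaining step is routine algebra: $\langle\rho_{\rm gas}(r)\rangle^{\rm d+} = \sqrt{\gamma_s q_s^{1/2} f^{-1/2}}\;\rho_{\rm gas}(a_v) = \big(\gamma_s^{1/2} q_s^{1/4} f^{-1/4}\big)\rho_{\rm gas}(a_v)$, which is the claimed formula. I expect the only real subtlety — and hence the point worth emphasizing in the write-up — to be conceptual rather than computational: justifying that the relevant ``deprojected spherical average of the density'' is the composite reconstruction above rather than a direct volume average of $\rho_{\rm gas}$ itself (which X-ray observations do not access), and verifying that the temperature and metallicity dependence of the emissivity genuinely drops out. This works cleanly here only because Theorem~\ref{thm.temp} returns the intrinsic $T(a_v)$ with no residual geometric bias factor; had the temperature average carried such a prefactor, the cooling function would not cancel and the result would acquire a dependence on the shape of $\Lambda(T)$.
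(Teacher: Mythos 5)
Your proposal is correct and follows essentially the same route as the paper: the paper likewise defines $\langle\rho_{\rm gas}(r)\rangle^{\rm d+}$ as $\bigl[\langle\epsilon(r)\rangle^{\rm d}/\Lambda(\langle T(r)\rangle^{\rm d+})\bigr]^{1/2}$, applies Theorem~\ref{thm.general} in the numerator and Theorem~\ref{thm.temp} in the denominator, and cancels $\Lambda(T(a_v))$ to obtain the factor $\gamma_s^{1/2} q_s^{1/4} f^{-1/4}$. Your added remark about why the cancellation requires the temperature average to be exactly $T(a_v)$ is a fair observation but does not change the argument.
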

\begin{proof}
From the definition of the X-ray emissivity,
\begin{equation}
\langle\rho_{\rm gas}(r)\rangle^{\rm d+} = \left[
\frac{\langle\epsilon(r)\rangle^{\rm d}}{\Lambda (\langle
T(r)\rangle^{\rm d+}) } \right]^{\frac{1}{2}},
\end{equation}
where we have suppressed the metallicity dependence of the plasma
emissivity $\Lambda$, which does not affect our arguments provided
that the metallicity depends only on $a_v$ as we are assuming for all
ICM properties. Applying Theorem \ref{thm.general} in the numerator
and Theorem \ref{thm.temp} in the denominator of the above equation
yields,
\begin{equation}
\langle\rho_{\rm gas}(r)\rangle^{\rm d+} = \left[
\frac{ \left(\gamma_s q_s^{\frac{1}{2}}
f^{-\frac{1}{2}}\right)\epsilon(a_v)}{\Lambda ( T(a_v) ) }
\right]^{\frac{1}{2}} = \left[
\frac{ \left(\gamma_s q_s^{\frac{1}{2}}
f^{-\frac{1}{2}}\right) \rho_{\rm gas}^2(a_v)\Lambda(T(a_v))}{\Lambda ( T(a_v) ) }
\right]^{\frac{1}{2}},
\end{equation}
where $a_v = (\gamma_s\sqrt{q_s})r$, which reduces to the desired result.
\end{proof}

\begin{coro} \label{coro.pgasentropy}
The deprojected spherical averages of the gas pressure and entropy are,
\begin{equation}
\langle P_{\rm gas}(r)\rangle^{\rm d+} =
\left(\gamma_s^{\frac{1}{2}} q_s^{\frac{1}{4}} f^{-\frac{1}{4}}\right)
P_{\rm gas}(a_v), \:\:\:\:\:\: 
\langle S(r)\rangle^{\rm d+} =
\left(\gamma_s^{-\frac{1}{3}} q_s^{-\frac{1}{6}} f^{\frac{1}{6}}\right)
S(a_v), \;{\rm where} \; a_v = (\gamma_s\sqrt{q_s})r.
\end{equation}
\end{coro}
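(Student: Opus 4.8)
The plan is to obtain both results purely algebraically from Theorems~\ref{thm.temp} and~\ref{thm.rhog}, treating $P_{\rm gas}$ and $S$ as composite quantities in the sense of Definition~\ref{sphdep.comp}. The key structural fact is that every geometric prefactor appearing in Theorems~\ref{thm.general}, \ref{thm.temp} and~\ref{thm.rhog} is a constant depending only on the intrinsic shape $(p_v,q_v)$ and the viewing angles $(\theta,\phi)$, and that all of these deprojected spherical averages share the single common argument $a_v=(\gamma_s\sqrt{q_s})r$. Consequently, forming products, quotients and fractional powers of such quantities multiplies (or exponentiates) the constant prefactors while leaving the argument $a_v$ intact --- there are no radius-dependent factors to track and no new Abel integrals to evaluate.

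For the pressure I would start from the ideal-gas equation of state $P_{\rm gas}=\rho_{\rm gas}k_B T/(\mu m_{\rm a})$ already used in \S\ref{mass.he}, so that $\langle P_{\rm gas}(r)\rangle^{\rm d+}\propto\langle\rho_{\rm gas}(r)\rangle^{\rm d+}\,\langle T(r)\rangle^{\rm d+}$. Theorem~\ref{thm.rhog} supplies the density prefactor $\gamma_s^{1/2}q_s^{1/4}f^{-1/4}$ and Theorem~\ref{thm.temp} supplies a temperature prefactor of unity, giving the stated $\langle P_{\rm gas}(r)\rangle^{\rm d+}=(\gamma_s^{1/2}q_s^{1/4}f^{-1/4})P_{\rm gas}(a_v)$. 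For the entropy I would adopt the standard X-ray definition $S\propto T\rho_{\rm gas}^{-2/3}$ (equivalently $k_B T\, n_{\rm e}^{-2/3}$), so that $\langle S(r)\rangle^{\rm d+}\propto\langle T(r)\rangle^{\rm d+}\,(\langle\rho_{\rm gas}(r)\rangle^{\rm d+})^{-2/3}$; again the temperature contributes unity, and raising the density prefactor to the power $-2/3$ gives $(\gamma_s^{1/2}q_s^{1/4}f^{-1/4})^{-2/3}=\gamma_s^{-1/3}q_s^{-1/6}f^{1/6}$, which is the claimed result.

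Since the arithmetic is immediate, the only point I would take care to state explicitly is the legitimacy, via Definition~\ref{sphdep.comp}, of letting the spherical-averaging operation distribute over these algebraic combinations: this works precisely because the effective circular average (Definition~\ref{circavg}) and the subsequent spherical Abel inversion have already been carried out on the primitive quantities $\epsilon(a_v)$ and $T(a_v)$ in Theorems~\ref{thm.general} and~\ref{thm.temp}, and the resulting profiles differ from their intrinsic ellipsoidal counterparts only by a constant prefactor and the substitution $r\to a_v=(\gamma_s\sqrt{q_s})r$. Pressure and entropy therefore inherit exactly that substitution together with a prefactor equal to the appropriate product of constant geometric factors. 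This is the ``main obstacle,'' though it is really a matter of bookkeeping rather than any genuine difficulty.
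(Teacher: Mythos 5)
Your argument is exactly the paper's: the corollary follows immediately from the definitions $P_{\rm gas}=\rho_{\rm gas}k_BT/(\mu m_{\rm a})$ and $S=(k_B/\mu m_{\rm a})T\rho_{\rm gas}^{-2/3}$ together with Theorems~\ref{thm.temp} and~\ref{thm.rhog}, with the constant geometric prefactors combining multiplicatively and the common substitution $a_v=(\gamma_s\sqrt{q_s})r$ carrying through. The exponents you obtain match the statement, so the proposal is correct and essentially identical to the paper's proof.
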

\begin{proof}
These results are immediate consequences of the definitions of each
quantity, $P_{\rm gas}= \rho_{\rm gas} k_BT/(\mu m_a)$ and $S=
(k_B/\mu m_a)T \rho_{\rm gas}^{-2/3}$, Theorem \ref{thm.temp},
and Theorem \ref{thm.rhog}.
\end{proof}

\begin{thm} \label{thm.m}
The deprojected spherical average of the total mass enclosed within
radius $r$ is, 
\begin{equation}
\langle M(<r)\rangle^{\rm d+} =  \left(\gamma_s q_s^{\frac{1}{2}}
\eta(p_v,q_v)\right)^{-1} M(<a_v), \;{\rm where} \; a_v =
(\gamma_s\sqrt{q_s})r, 
\end{equation}
for any temperature profile.
\end{thm}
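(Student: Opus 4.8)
The plan is to identify $\langle M(<r)\rangle^{\rm d+}$ with the hydrostatic mass that an observer reconstructs by inserting the deprojected, spherically averaged temperature and gas density into the \emph{spherical} hydrostatic mass formula, and then to re-express everything in terms of the intrinsic ellipsoidal profiles using Theorems~\ref{thm.temp} and~\ref{thm.rhog} together with Theorem~\ref{thm.trad}.

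First I would write the composite quantity explicitly,
\begin{equation}
\langle M(<r)\rangle^{\rm d+} = -\frac{r\, k_B\, \langle T(r)\rangle^{\rm d+}}{\mu\, m_a\, G}\left[\frac{d\ln\langle\rho_{\rm gas}(r)\rangle^{\rm d+}}{d\ln r} + \frac{d\ln\langle T(r)\rangle^{\rm d+}}{d\ln r}\right],
\end{equation}
which is the $p_v=q_v=1$ form of eqn.~(\ref{eqn.trad.ep}) and is precisely a composite of deprojected spherical averages in the sense of Definition~\ref{sphdep.comp}. This is the step where hydrostatic equilibrium enters, since it is what allows the observer to write a mass in this form.

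Next I would substitute $\langle T(r)\rangle^{\rm d+}=T(a_v)$ from Theorem~\ref{thm.temp} and $\langle\rho_{\rm gas}(r)\rangle^{\rm d+}=(\gamma_s^{1/2}q_s^{1/4}f^{-1/4})\rho_{\rm gas}(a_v)$ from Theorem~\ref{thm.rhog}, with $a_v=(\gamma_s\sqrt{q_s})r$. The key observation is that $a_v$ and $r$ differ only by the constant factor $\gamma_s\sqrt{q_s}$ (independent of $r$ for fixed shape and orientation), so $d\ln a_v=d\ln r$; moreover the constant geometric prefactor multiplying $\rho_{\rm gas}(a_v)$ in $\langle\rho_{\rm gas}\rangle^{\rm d+}$ is annihilated by the logarithmic derivative. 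Hence $d\ln\langle\rho_{\rm gas}\rangle^{\rm d+}/d\ln r = d\ln\rho_{\rm gas}/d\ln a_v$ and $d\ln\langle T\rangle^{\rm d+}/d\ln r = d\ln T/d\ln a_v$, so the bracket above becomes exactly the bracket in eqn.~(\ref{eqn.trad.ep}). Comparing with Theorem~\ref{thm.trad} then gives $\langle M(<r)\rangle^{\rm d+} = (r/a_v)\,\eta(p_v,q_v)^{-1} M(<a_v) = \left(\gamma_s q_s^{1/2}\eta(p_v,q_v)\right)^{-1} M(<a_v)$. Since the temperature profile cancels entirely from the ratio, the ``for any temperature profile'' clause is immediate.

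The only point requiring care is the first step: pinning down what the composite quantity $\langle M(<r)\rangle^{\rm d+}$ is, namely the mass obtained by feeding $\langle T(r)\rangle^{\rm d+}$ and $\langle\rho_{\rm gas}(r)\rangle^{\rm d+}$ into the ordinary spherical hydrostatic equation. Everything after that is bookkeeping of the constants $\gamma_s$, $q_s$, $f$, and $\eta$, with no genuine obstacle; the one small subtlety worth flagging explicitly in the write-up is that both the proportionality constant between $a_v$ and $r$ and the constant prefactor in $\langle\rho_{\rm gas}\rangle^{\rm d+}$ are inert under $d/d\ln r$, so the logarithmic density and temperature slopes are unchanged by the spherical-averaging construction.
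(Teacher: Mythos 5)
Your proof is correct and follows essentially the same route as the paper: both identify $\langle M(<r)\rangle^{\rm d+}$ with the spherical hydrostatic mass estimator applied to the deprojected spherically averaged profiles, substitute the deprojection relations (Theorems~\ref{thm.temp} and~\ref{thm.rhog}), exploit the $r$-independence of $\gamma_s$, $q_s$, $f$, and compare with the EP hydrostatic mass relation. The only cosmetic difference is that you work with the logarithmic-derivative form and cite Theorem~\ref{thm.trad}, whereas the paper works with the pressure-gradient form via Corollary~\ref{coro.pgasentropy}, eqn.~(\ref{eqn.he.av}), and Theorem~\ref{thm.epmass}, which are equivalent.
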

\begin{proof}
Applying hydrostatic equilibrium for a spherically symmetric cluster
gives, 
\begin{equation}
\langle M(<r)\rangle^{\rm d+} = \frac{-1}{\langle\rho_{\rm
gas}(r)\rangle^{\rm d+}} \frac{r^2}{{\rm G}}  \frac{d}{dr}\langle
P_{\rm gas}(r)\rangle^{\rm d+} = \frac{-1}{\rho_{\rm gas}(a_v)}
\frac{r^2}{{\rm G}} \frac{d}{dr}P_{\rm gas}(a_v),
\end{equation}
where $a_v = (\gamma_s\sqrt{q_s})r$, and the R.H.S.\ made use of
Corollary \ref{coro.pgasentropy}, Theorem \ref{thm.rhog}, and the fact
that $\gamma_s$, $q_s$, and $f$ depend only on $p_v$, $q_v$, and the
fixed line-of-sight projection orientation.  Changing the variable
from $r$ to $a_v = (\gamma_s\sqrt{q_s})r$ so that $d/dr =
(\gamma_s\sqrt{q_s})d/da_v$, after simplifying, gives,
\begin{equation}
\langle M(<r)\rangle^{\rm d+} =  \frac{1}{\gamma_s\sqrt{q_s}} \left[
\frac{-1}{\rho_{\rm gas}(a_v)} \frac{a_v^2}{{\rm G}}
\frac{d}{da_v}P_{\rm gas}(a_v) \right].
\end{equation}
By making use of eqn.\ (\ref{eqn.he.av}) and Theorem \ref{thm.epmass} the quantity in brackets equals
$M(<a_v)/\eta(p_v,q_v)$, which proves the theorem, for any temperature
profile $T(a_v)$.
\end{proof}

We have chosen to state explicitly that this result for the total mass
holds for any temperature profile since we desire to emphasize this
point below in Theorem~\ref{thm.scalefree}. Next, however, we consider
the gas mass.

\begin{thm} \label{thm.mgas}
The deprojected spherical average of the gas mass enclosed within radius $r$ is,
\begin{equation}
\langle M_{\rm gas}(<r)\rangle^{\rm d+} =
\left(\gamma_s^{\frac{5}{2}} q_s^{\frac{5}{4}} f^{\frac{1}{4}} p_vq_v\right)^{-1} M_{\rm gas}(<a_v), \;{\rm where} \; a_v = (\gamma_s\sqrt{q_s})r.
\end{equation}
\begin{proof}
The gas mass within a spherical volume of radius $r$ is,
\begin{equation}
\langle M_{\rm gas}(<r)\rangle^{\rm d+} = \int_0^r \langle
\rho_{\rm gas}(r)\rangle^{\rm d+} 4\pi r^2dr = 
\left(\gamma_s^{\frac{1}{2}} q_s^{\frac{1}{4}} f^{-\frac{1}{4}}\right)
\int_0^r \rho_{\rm gas}(a_v) 4\pi r^2dr,
\end{equation}
where $a_v = (\gamma_s\sqrt{q_s})r$, and in the R.H.S.\ we applied
Theorem \ref{thm.rhog} and again (as in Theorem~\ref{thm.m}) made use
of the fact that $\gamma_s$, $q_s$, and $f$ do not depend on
$r$. Changing the integration variable from $r$ to $a_v =
(\gamma_s\sqrt{q_s})r$, and simplifying, gives,
\begin{equation}
\langle M_{\rm gas}(<r)\rangle^{\rm d+} = 
\left(\gamma_s^{\frac{5}{2}} q_s^{\frac{5}{4}}
f^{\frac{1}{4}}\right)^{-1} \int_0^{a_v} \rho_{\rm gas}(a_v) 4\pi a_v^2da_v.
\end{equation}
Since an ellipsoidal volume element is $dV = 4\pi p_vq_va_v^2da_v$,
the integral equals $M_{\rm gas}(<a_v)/(p_vq_v)$, which proves the
theorem. 
\end{proof}
\end{thm}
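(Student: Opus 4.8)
The plan is to express the spherically-averaged gas mass as an integral of the deprojected, spherically-averaged density from Theorem~\ref{thm.rhog}, change the integration variable to the ellipsoidal radius, and then identify the resulting integral with the true ellipsoidal gas mass $M_{\rm gas}(<a_v)$.

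First I would start from the defining relation $\langle M_{\rm gas}(<r)\rangle^{\rm d+} = \int_0^r \langle\rho_{\rm gas}(r)\rangle^{\rm d+}\,4\pi r^2\,dr$ and substitute $\langle\rho_{\rm gas}(r)\rangle^{\rm d+} = (\gamma_s^{1/2}q_s^{1/4}f^{-1/4})\,\rho_{\rm gas}(a_v)$ with $a_v=(\gamma_s\sqrt{q_s})\,r$, as given by Theorem~\ref{thm.rhog}. As already used in Theorem~\ref{thm.m}, the quantities $\gamma_s$, $q_s$, and $f$ are fixed by the intrinsic axis ratios $(p_v,q_v)$ and by the (fixed) line-of-sight orientation, and in particular do not depend on $r$, so this prefactor can be pulled outside the integral.

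Next I would change the integration variable from $r$ to $a_v = (\gamma_s\sqrt{q_s})\,r$, so that $r^2\,dr = (\gamma_s\sqrt{q_s})^{-3}a_v^2\,da_v$. Collecting this Jacobian with the density prefactor gives $\gamma_s^{1/2-3}q_s^{1/4-3/2}f^{-1/4} = (\gamma_s^{5/2}q_s^{5/4}f^{1/4})^{-1}$, so that $\langle M_{\rm gas}(<r)\rangle^{\rm d+} = (\gamma_s^{5/2}q_s^{5/4}f^{1/4})^{-1}\int_0^{a_v}\rho_{\rm gas}(a_v)\,4\pi a_v^2\,da_v$.

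The one step that is not mere bookkeeping is recognizing that this last integral equals $M_{\rm gas}(<a_v)/(p_v q_v)$ rather than $M_{\rm gas}(<a_v)$ itself: the gas mass interior to the ellipsoid of ellipsoidal radius $a_v$ is $\int\rho_{\rm gas}\,dV$ taken over that ellipsoid, and since the ellipsoid with semi-axes $(a_v,\,p_v a_v,\,q_v a_v)$ has volume $\frac{4}{3}\pi p_v q_v a_v^3$, its shell element is $dV = 4\pi p_v q_v a_v^2\,da_v$. Inserting the factor $1/(p_v q_v)$ then produces the stated prefactor $(\gamma_s^{5/2}q_s^{5/4}f^{1/4}p_v q_v)^{-1}$. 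I do not expect a genuine obstacle here; the only pitfalls are keeping the exponents straight and not dropping the $p_v q_v$ from the ellipsoidal volume element. A useful sanity check is the spherical limit $p_v=q_v=1$, in which $\gamma_s=q_s=f=1$ and the relation correctly collapses to the identity $\langle M_{\rm gas}(<r)\rangle^{\rm d+}=M_{\rm gas}(<r)$.
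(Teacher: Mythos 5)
Your proposal is correct and follows essentially the same route as the paper's own proof: substituting the result of Theorem~\ref{thm.rhog} into the spherical mass integral, changing variables to $a_v=(\gamma_s\sqrt{q_s})r$, and identifying the remaining integral as $M_{\rm gas}(<a_v)/(p_vq_v)$ via the ellipsoidal shell volume element $dV=4\pi p_vq_va_v^2\,da_v$. The exponent bookkeeping and the spherical-limit sanity check are both sound.
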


\begin{coro} \label{coro.fgas}
The deprojected spherical average of the gas mass fraction enclosed
within radius $r$ is, 
\begin{equation}
\langle f_{\rm gas}(<r)\rangle^{\rm d+} =
\left(\gamma_s^{\frac{3}{2}} q_s^{\frac{3}{4}}
f^{\frac{1}{4}}p_vq_v/\eta(p_v,q_v)\right)^{-1} f_{\rm gas}(<a_v), \;{\rm
where} \; a_v = (\gamma_s\sqrt{q_s})r. 
\end{equation}
\end{coro}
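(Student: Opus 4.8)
The plan is to obtain this result for free as the ratio of two composite quantities already in hand. By definition $f_{\rm gas}(<r) = M_{\rm gas}(<r)/M(<r)$, and the deprojected spherical average of a gas fraction is naturally interpreted (in the sense of Definition~\ref{sphdep.comp}) as the deprojected spherically averaged gas mass divided by the deprojected spherically averaged total mass. So the first step is simply to write $\langle f_{\rm gas}(<r)\rangle^{\rm d+} = \langle M_{\rm gas}(<r)\rangle^{\rm d+}\big/\langle M(<r)\rangle^{\rm d+}$, both evaluated at the common argument $a_v = (\gamma_s\sqrt{q_s})r$.

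Next I would substitute the expressions from Theorem~\ref{thm.mgas} and Theorem~\ref{thm.m}. The prefactor multiplying $M_{\rm gas}(<a_v)$ is $\left(\gamma_s^{5/2} q_s^{5/4} f^{1/4} p_v q_v\right)^{-1}$ and the prefactor multiplying $M(<a_v)$ is $\left(\gamma_s q_s^{1/2}\eta(p_v,q_v)\right)^{-1}$. Dividing the former by the latter, the powers of $\gamma_s$, $q_s$, $f$ combine to give a net geometric factor $\gamma_s^{-3/2} q_s^{-3/4} f^{-1/4}$, the $\eta$ from the total-mass prefactor comes up into the numerator, and the $p_vq_v$ stays in the denominator, yielding overall $\left(\gamma_s^{3/2} q_s^{3/4} f^{1/4} p_v q_v/\eta(p_v,q_v)\right)^{-1}$. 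Meanwhile the intrinsic ratio $M_{\rm gas}(<a_v)/M(<a_v)$ is by definition exactly $f_{\rm gas}(<a_v)$, so multiplying the two factors gives the stated formula.

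I expect no genuine analytical obstacle here; the work is entirely bookkeeping of the exponents of $\gamma_s$, $q_s$, and $f$. The one point that deserves an explicit remark is provenance of assumptions: because Theorem~\ref{thm.m} was derived using the hydrostatic equation~(\ref{eqn.he.av}) together with Theorem~\ref{thm.epmass}, the corollary inherits the requirement that the system be an EP in hydrostatic equilibrium, while (as in Theorem~\ref{thm.m}) it holds for \emph{any} temperature profile $T(a_v)$. I would also note in passing the sanity check that for $p_v=q_v=1$ one has $\gamma_s=q_s=f=\eta=1$, so $\langle f_{\rm gas}(<r)\rangle^{\rm d+} = f_{\rm gas}(<r)$, as it must.
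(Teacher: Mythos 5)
Your proposal is correct and follows exactly the paper's own route: the corollary is obtained by forming $\langle M_{\rm gas}(<r)\rangle^{\rm d+}/\langle M(<r)\rangle^{\rm d+}$ from Theorems~\ref{thm.mgas} and~\ref{thm.m} and simplifying the geometric prefactors, and your exponent bookkeeping giving $\left(\gamma_s^{3/2} q_s^{3/4} f^{1/4} p_v q_v/\eta(p_v,q_v)\right)^{-1}$ is right. Your added remarks on the inherited hydrostatic-equilibrium assumption and the spherical limit are consistent with the paper (which notes hydrostatic equilibrium is required for this corollary) and do not change the argument.
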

\begin{proof}
This result follows immediately from the definition of the gas
fraction, $f_{\rm gas}(<r) = M_{\rm gas}(<r)/M(<r)$, Theorem
\ref{thm.m}, and Theorem \ref{thm.mgas}.
\end{proof}

\begin{thm} \label{thm.scalefree} 
For the scale-free logarithmic EP, $\Phi(a_v) = ({\rm
G}M_{\Delta}/a_{\Delta})\ln (a_v)$, spherical averaging does not bias
the mass profile in the sense that,
\begin{equation}
\langle M(<r)\rangle^{\rm d+} = \langle M(<r) \rangle^{\rm true},
\end{equation}
where $\langle M(<r) \rangle^{\rm true}$ is the spherical average of
true mass distribution corresponding to $\Phi(a_v)$. This result is
independent of the gas temperature profile.
\end{thm}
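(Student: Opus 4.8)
The plan is to evaluate both sides of the claimed identity in closed form using results already in hand, and to observe that each collapses to the same expression, linear in $r$, with no residual dependence on the shape $(p_v,q_v)$, the orientation, or the temperature profile.

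First I would specialize the enclosed-mass formula of Theorem~\ref{thm.epmass} to the logarithmic potential. Since $\Phi(a_v) = ({\rm G}M_\Delta/a_\Delta)\ln a_v$ gives $a_v\,d\Phi/da_v = {\rm G}M_\Delta/a_\Delta = {\rm const}$, Theorem~\ref{thm.epmass} yields $M(<a_v) = \eta(p_v,q_v)(M_\Delta/a_\Delta)\,a_v$; that is, the ellipsoidal mass profile is exactly linear in $a_v$. Next I would substitute this linear profile, together with $a_v=(\gamma_s\sqrt{q_s})r$, into Theorem~\ref{thm.m}. The prefactor there is $(\gamma_s q_s^{1/2}\eta)^{-1}$, so the factors of $\eta$, $\gamma_s$, and $q_s^{1/2}$ cancel identically and one is left with $\langle M(<r)\rangle^{\rm d+} = (M_\Delta/a_\Delta)\,r$. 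It is worth emphasizing in the final proof that this cancellation is precisely what makes the logarithmic case special: Theorem~\ref{thm.m} already removes the temperature profile from the problem and expresses the bias purely in terms of geometry, and it is only the linearity of $M(<a_v)$ in $a_v$ that then forces the geometric bias factor to collapse to unity.

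For the other side, I would note that $\langle M(<r)\rangle^{\rm true}$ --- the mass within a sphere of radius $r$ for the true (Poisson-inverted) density of the EP --- is exactly the quantity computed in Theorem~\ref{thm.epspmass}, namely $M(<r) = (r/4\pi{\rm G})\int_{4\pi} a_v\,(d\Phi/da_v)\,d\Omega$. Because $a_v\,d\Phi/da_v$ is the same constant at every angle for the logarithmic potential, the solid-angle integral is trivial and returns $\langle M(<r)\rangle^{\rm true} = (M_\Delta/a_\Delta)\,r$. Comparing with the previous paragraph gives the theorem, and the temperature profile never entered the argument.

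I do not anticipate a genuine obstacle here; the proof is essentially two short substitutions. The only points requiring care are the bookkeeping of the $\gamma_s$, $q_s$, and $\eta$ prefactors, and correctly identifying ``$\langle M(<r)\rangle^{\rm true}$'' with the output of Theorem~\ref{thm.epspmass} (the Gauss's-law spherical enclosed mass) rather than with any azimuthal average of the reconstructed density. One caveat worth a sentence is that for a sufficiently flattened logarithmic EP the underlying density becomes negative in some region, but this does not affect the argument because Theorem~\ref{thm.epspmass} follows from Gauss's law independently of the sign of $\rho$.
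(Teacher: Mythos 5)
Your proposal is correct and follows essentially the same route as the paper's own proof: specialize Theorem~\ref{thm.epmass} to get $M(<a_v)=\eta\,(M_\Delta/a_\Delta)\,a_v$, insert it into Theorem~\ref{thm.m} so the $\eta$, $\gamma_s$, and $q_s^{1/2}$ factors cancel, and identify $\langle M(<r)\rangle^{\rm true}$ with the Gauss's-law expression of Theorem~\ref{thm.epspmass}, whose angular integral is trivial because $a_v\,d\Phi/da_v$ is constant. The only differences are cosmetic (order of the two computations and your added remark on the sign of the density), so nothing further is needed.
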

\begin{proof}
This is a special case of the general problem that is the focus of
Paper~2; i.e., we wish to compare spherically averaged quantities
obtained by an observer to those obtained by theoretical
studies. While an observer will measure deprojected spherically
averaged quantities (Definitions \ref{sphdep} and
\ref{sphdep.comp}), the theorist typically spherically averages the
true three-dimensional distribution directly. From Theorem
\ref{thm.epspmass}, we obtain the spherical average of the mass distribution
generated by the scale-free logarithmic EP, $\Phi(a_v) = ({\rm
G}M_{\Delta}/a_{\Delta})\ln (a_v)$, where $M_{\Delta}$ and
$a_{\Delta}$ are constants,
\begin{equation}
\langle M(<r) \rangle^{\rm true} = \frac{r}{4\pi \rm G}\int_{4\pi}a_v\frac{d\Phi (a_v)}{da_v}d{\rm
\Omega} = M_{\Delta}\frac{r}{a_{\Delta}}.
\end{equation}
For comparison, the deprojected spherical average of the mass profile
(Theorem \ref{thm.m}) depends on the mass enclosed within ellipsoidal
radius $a_v$, which is obtained by inserting the definition of the
scale-free $\Phi(a_v)$ into eqn.\ (\ref{eqn.mep}) of
Theorem~\ref{thm.epmass},
\begin{equation}
M(<a_v) = \eta(p_v,q_v)M_{\Delta}\frac{a_v}{a_{\Delta}} =
\left(\gamma_sq_s^{\frac{1}{2}}\eta(p_v,q_v)\right)
M_{\Delta}\frac{r}{a_{\Delta}}, 
\end{equation}
where the R.H.S.\ made use of the substitution $a_v =
(\gamma_s\sqrt{q_s})r$ as appropriate for Theorem \ref{thm.m}. Now
substituting $M(<a_v)$ into Theorem~\ref{thm.m} gives, $\langle
M(<r)\rangle^{\rm d+} = M_{\Delta}r/a_{\Delta} = \langle M(<r)
\rangle^{\rm true}$, independent of the gas temperature profile, which
proves the theorem.
\end{proof}

This theorem complements and extends the result presented in Appendix
B of \citet{chur08a}. These authors consider the bias due to spherical
averaging of an isothermal ICM with a scale-free gas density,
$\rho_{\rm gas}= h(\theta,\phi)r^{-\alpha}$, where $h(\theta,\phi)$ is
some positive function. They argue that this gas density distribution
leads to an inferred total mass that also displays no bias due to
spherical averaging. Because their model implies a potential,
$\Phi\propto \ln [h(\theta,\phi)r^{-\alpha}]$, equivalent to the
scale-free potential we employed above in Theorem \ref{thm.scalefree}
for $\alpha=1$ and $h=1/\sqrt{f}$, it is reassuring that the two
results each predict no bias for the special case of an isothermal
ICM. The assumption of the scale-free EP with no {\it a priori}
restriction on the form of the gas density has allowed us to
generalize rigorously the zero-bias result for any ICM temperature
profile.


\subsection{Sunyaev-Zel'dovich Effect and Related X-ray Quantities}
\label{sz}

Now we consider a galaxy cluster also to be observed via the thermal
SZ effect, and we continue to assume that all
three-dimensional ICM properties depend only on the ellipsoidal radius
$a_v$.

\begin{coro} \label{pressure.sz}
The deprojected spherical average of the ICM electron pressure,
$P_e=n_ek_BT$, obtained from a measurement of the thermal SZ effect
is,
\begin{equation}
\langle P_e (r)\rangle^{\rm d}_{\rm SZ}  = 
\left(\gamma_s q_s^{\frac{1}{2}} f^{-\frac{1}{2}}\right)
P_e (a_v), \;{\rm where} \; a_v = (\gamma_s\sqrt{q_s})r. \label{eqn.psz}
\end{equation} 
\end{coro}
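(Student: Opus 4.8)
The plan is to recognize that Corollary~\ref{pressure.sz} is nothing more than Theorem~\ref{thm.general} applied with the X-ray volume emissivity $\epsilon(a_v)$ replaced by the electron pressure $P_e(a_v)$. The single physical input needed is that the thermal SZ effect produces a Compton-$y$ distortion whose amplitude along any line of sight is proportional to the line-of-sight integral of the electron pressure, $y \propto \int P_e\, d\ell$, with a position-independent constant of proportionality $\sigma_T/(m_e c^2)$. Since at the start of \S\ref{sz} we again assume all three-dimensional ICM quantities depend only on $a_v$ (no hydrostatic equilibrium is required here), we have $P_e = P_e(a_v)$, and the projected SZ signal is therefore exactly of the form of eqn.~(\ref{eq.ep.proj}) with $\epsilon \to P_e$: a two-dimensional elliptical distribution on the sky of constant shape $q_s$ and orientation, $\Sigma_{\rm SZ}(a_s) \propto (2/\sqrt{f})\int_{a_s}^{\infty} P_e(a_v)\,a_v\,da_v/\sqrt{a_v^2-a_s^2}$.

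From here the argument runs verbatim as in the proof of Theorem~\ref{thm.general}. One takes the effective circular average (Definition~\ref{circavg}) of $\Sigma_{\rm SZ}$, obtaining the circularly symmetric profile $\langle\Sigma_{\rm SZ}(R)\rangle = \Sigma_{\rm SZ}(\gamma_s\sqrt{q_s}R)$, then deprojects it assuming spherical symmetry via the inverse Abel relation. Comparing the spherical inversion (prefactor $\gamma_s\sqrt{q_s}$) with the true ellipsoidal Abel inversion of eqn.~(\ref{eq.ep.proj}) applied to $P_e$ (prefactor $\sqrt{f}$) yields $\langle P_e(r)\rangle^{\rm d}_{\rm SZ} = (\gamma_s q_s^{1/2} f^{-1/2})\,P_e(a_v)$ with $a_v = (\gamma_s\sqrt{q_s})r$, as claimed. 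The overall constant $\sigma_T/(m_e c^2)$ is irrelevant because the Abel integrals are linear in their integrand; equivalently, one may simply take $P_e$ itself as the deprojected SZ observable and invoke Theorem~\ref{thm.general} directly.

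The one point warranting care — and the reason the geometric factor here is $\gamma_s q_s^{1/2} f^{-1/2}$ rather than the "square-rooted" version $\gamma_s^{1/2} q_s^{1/4} f^{-1/4}$ that appeared for $\rho_{\rm gas}$ and $P_{\rm gas}$ in Theorem~\ref{thm.rhog} and Corollary~\ref{coro.pgasentropy} — is that the SZ signal is \emph{linear} in $P_e$, whereas the X-ray surface brightness is quadratic in $\rho_{\rm gas}$ through $\epsilon \propto \rho_{\rm gas}^2\Lambda(T)$. Hence $P_e$ is recovered by a single, direct Abel deprojection and inherits precisely the emissivity prefactor of Theorem~\ref{thm.general}; correspondingly the result is a quantity of the type in Definition~\ref{sphdep} (superscript $^{\rm d}$), not a composite of the type in Definition~\ref{sphdep.comp}. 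There is no genuine obstacle once the projection-integral analogy is stated: the corollary follows with no further computation.
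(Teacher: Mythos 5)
Your argument is correct and is essentially identical to the paper's proof: both identify the Compton-$y$ signal as the line-of-sight integral of $P_e(a_v)$, i.e.\ an instance of eqn.~(\ref{eq.ep.proj}) with $\epsilon\rightarrow \sigma_T P_e/(m_ec^2)$, and then invoke Theorem~\ref{thm.general} directly, the overall constant being irrelevant by linearity. Your added remark contrasting the linear dependence on $P_e$ with the quadratic $\rho_{\rm gas}$ dependence of the X-ray emissivity correctly explains the different geometric prefactor but is not needed for the proof itself.
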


\begin{proof}
The thermal SZ effect is the temperature decrement, $\delta T/T=-2y_c$
in the Rayleigh-Jeans tail of the Cosmic Background Radiation (CBR)
spectrum due to inverse Compton scattering of CBR photons by energetic
ICM electrons. The Compton-y parameter is,
\begin{equation}
y_c = \frac{\sigma_T}{m_e c^2}\int_{\rm los}P_edz^{\prime} =
\frac{\sigma_T}{m_e c^2} \frac{2}{\sqrt{f}}\int_{a_s}^{\infty}
\frac{P_e (a_v)a_vda_v}{\sqrt{a_v^2 - a_s^2}}, 
\end{equation}
where $\sigma_T$ is the Thomson cross section, $m_e$ is the electron
mass, and the R.H.S. follows from the condition that $P_e = P_e(a_v)$
(i.e., eqn.\ \ref{eq.ep.proj}). By associating $y_c(a_s)$ with
$\Sigma(a_s)$ and $\sigma_T P_e(a_v)/(m_c c^2)$ with $\epsilon(a_v)$
the result follows immediately from Theorem \ref{thm.general}. 
\end{proof}

The geometrical factor $\gamma_s q_s^{\frac{1}{2}} f^{-\frac{1}{2}}$
in eqn.\ (\ref{eqn.psz}) is the square of the corresponding factor for
the ICM pressure obtained from X-ray studies (Corollary
\ref{coro.pgasentropy}), indicating that spherical averaging has a
stronger impact on the ICM pressure inferred from SZ studies. Since
this difference in geometrical factors can be exploited to measure the
intrinsic shape and orientation of a cluster, we state it formally.

\begin{coro} \label{pressure.szx}
The ratio of the deprojected spherical averages of the ICM electron
pressures inferred from SZ and X-ray studies is,
\begin{equation}
\langle  P_{\rm SZ,X}(r)\rangle^{\rm d+}  \equiv \frac{\langle P_e (r)\rangle^{\rm d}_{\rm SZ}}{\langle P_e (r)\rangle^{\rm d+}_{\rm X}} = \gamma_s^{\frac{1}{2}} q_s^{\frac{1}{4}}
f^{-\frac{1}{4}}.
\end{equation} 
\end{coro}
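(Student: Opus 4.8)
The plan is to obtain the result by simply forming the quotient of the two pressure expressions already derived, observing that both the SZ and the X-ray deprojected spherical averages of the electron pressure are written in terms of the \emph{same} intrinsic profile $P_e(a_v)$ evaluated at the \emph{same} ellipsoidal radius $a_v = (\gamma_s\sqrt{q_s})r$, so that the intrinsic profile cancels and only the geometric prefactors survive.

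First I would note that the X-ray electron pressure $P_e = n_e k_B T$ differs from the total gas pressure $P_{\rm gas} = \rho_{\rm gas} k_B T/(\mu m_a)$ only by the factor $\mu m_a n_e/\rho_{\rm gas} = \mu/\mu_e$, which is position-independent under our standing assumption that the ICM composition is stratified on surfaces of constant $a_v$ like all other ICM properties. Hence Corollary~\ref{coro.pgasentropy} carries over verbatim to $P_e$, giving $\langle P_e(r)\rangle^{\rm d+}_{\rm X} = \left(\gamma_s^{1/2} q_s^{1/4} f^{-1/4}\right) P_e(a_v)$ with $a_v = (\gamma_s\sqrt{q_s})r$. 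Next I would take the SZ result from Corollary~\ref{pressure.sz}, $\langle P_e(r)\rangle^{\rm d}_{\rm SZ} = \left(\gamma_s q_s^{1/2} f^{-1/2}\right) P_e(a_v)$, again with $a_v = (\gamma_s\sqrt{q_s})r$. Dividing, the intrinsic factor $P_e(a_v)$ cancels (both are referred to the same $a_v$ for a given line-of-sight orientation), and the geometric prefactors combine as $\left(\gamma_s q_s^{1/2} f^{-1/2}\right)\big/\left(\gamma_s^{1/2} q_s^{1/4} f^{-1/4}\right) = \gamma_s^{1/2} q_s^{1/4} f^{-1/4}$, which is the claimed expression for $\langle P_{\rm SZ,X}(r)\rangle^{\rm d+}$.

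There is no substantive obstacle here: the statement is an immediate algebraic consequence of Corollaries~\ref{pressure.sz} and~\ref{coro.pgasentropy} (both of which ultimately rest on Theorem~\ref{thm.general}). The only point worth flagging is the consistency check just made — namely that the constant relating $P_e$ to $P_{\rm gas}$ does not vary with radius, so that the $\gamma_s^{1/2} q_s^{1/4} f^{-1/4}$ prefactor of the X-ray pressure indeed applies to the electron pressure, and that both pressures are expressed in the same ellipsoidal coordinate so the cancellation of the intrinsic profiles is exact rather than approximate.
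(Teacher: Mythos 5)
Your proposal is correct and follows essentially the same route as the paper, which likewise obtains the result immediately by dividing Corollary~\ref{pressure.sz} by Corollary~\ref{coro.pgasentropy}, converting the X-ray gas pressure to an electron pressure through a position-independent factor (written there as $(2+\mu)/5$ rather than your $\mu/\mu_e$) that cancels against the same factor in the intrinsic profiles. Your explicit check that both averages refer to the same ellipsoidal radius $a_v=(\gamma_s\sqrt{q_s})r$ and that the conversion constant does not vary with radius is exactly the (implicit) content of the paper's one-line proof.
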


\begin{proof}
This result follows immediately from Corollaries
\ref{coro.pgasentropy} and \ref{pressure.sz}, where we have used the
ICM electron pressure from X-rays, $\langle P_e (r)\rangle^{\rm
d+}_{\rm X} = ((2+\mu)/5)\langle P_{\rm gas}(r)\rangle^{\rm
d+}$. \label{eqn.pszx}
\end{proof}

We remark that $\langle P_{\rm SZ,X}(r)\rangle^{\rm d+}$ is, in fact,
constant with radius and does not overtly display any dependence on
the distance to the cluster, and hence the Hubble Constant,
$H_0$. However, $P_{\rm X} \propto \rho_{\rm gas}\propto
\sqrt{\epsilon}$, where $\epsilon$ is the volume emissivity.
To obtain physical units for the emissivity requires converting the
observed X-ray flux to a luminosity density, the net result of which is that
$P_{\rm X}$ is inversely proportional to the square root of
the cluster distance, leading to $\langle P_{\rm SZ,X}(r)\rangle^{\rm
d+}\propto 1/\sqrt{H_0}$.

The possibility of uncovering the intrinsic shape of the cluster ICM
by combining X-ray and SZ measurements has been recognized for over
ten
years~\citep[e.g.,][]{zaro98a,fox02a,rebl00a,lee04a,defi05a,puch06a,sere07a,mahd11a}. This
promising technique has already provided interesting constraints on
cluster shapes for many clusters~\citep{defi05a,sere06a} using
isothermal triaxial $\beta$ models for the ICM, and more recently
models with a radially varying temperature profile have been applied
to the cluster A~1689~\citep{sere11a}. Corollary
\ref{pressure.szx} defines a particular approach to this problem
that has some attractive characteristics. First, the ratio of
spherically averaged pressures is valid for any temperature profile
$T(a_v)$. Second, the relationship does not assume a particular ICM
radial density profile (e.g., $\beta$ model) and, in principle, can be
deprojected using the traditional spherical onion peeling
procedure. Hence, studies can be conducted entirely in the context of
spherical symmetry to obtain $\langle P_{\rm SZ,X}(r)\rangle^{\rm d+}$
and then, supplemented with a measurement of the average ICM axial
ratio on the sky ($q_s$), can be used to constrain the geometrical
factor $ \gamma_s^{\frac{1}{2}} q_s^{\frac{1}{4}}
f^{-\frac{1}{4}}$. For the general triaxial ellipsoid this factor
depends on the intrinsic shape via the axial ratios $p_v$ and $q_v$
and the orientation $(\theta,\phi)$. For spheroids only a single axial
ratio and inclination angle are required.

We now consider the quantity,
\begin{equation}
Y_{\rm SZ} \equiv \frac{1}{D^2_A(z)}\frac{\sigma_T}{m_e c^2}
\int_V P_edV,
\end{equation}
where $V$ is the volume and $D_A(z)$ is the angular diameter distance
to the cluster. Since this quantity equals the integral of $y_c$ over
solid angle in the limit of a small angle subtended on the sky, it is
usually referred to as the ``integrated Compton-y
parameter''~\citep[e.g.,][]{whit02a}.  We will take $V$ to be a large
spherical or ellipsoidal region centered on the cluster.

\begin{thm} \label{thm.yz}
The deprojected spherical average of the integrated Compton-y parameter is,
\begin{equation}
\langle Y_{\rm SZ}(<r)\rangle^{\rm d+} =
\left(\gamma_s^2 q_s f^{\frac{1}{2}} p_vq_v\right)^{-1} Y_{\rm
SZ}(<a_v), \;{\rm where} \; a_v = (\gamma_s\sqrt{q_s})r. 
\end{equation}
\end{thm}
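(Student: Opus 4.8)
The plan is to follow exactly the route used in the proof of Theorem~\ref{thm.mgas}, with the gas density replaced by the electron pressure. By construction, the deprojected spherical average of $Y_{\rm SZ}$ interior to $r$ is the volume integral of the deprojected, spherically averaged SZ electron pressure over the sphere of radius $r$,
\begin{equation}
\langle Y_{\rm SZ}(<r)\rangle^{\rm d+} = \frac{1}{D_A^2(z)}\frac{\sigma_T}{m_e c^2}\int_0^r \langle P_e(r)\rangle^{\rm d}_{\rm SZ}\, 4\pi r^2\,dr,
\end{equation}
whereas $Y_{\rm SZ}(<a_v)$ is the corresponding integral of the intrinsic $P_e(a_v)$ over the ellipsoidal region of ellipsoidal radius $a_v$. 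Since the prefactor $\sigma_T/(D_A^2(z)\, m_e c^2)$ is identical on both sides (the cluster sits at a fixed distance), it cancels in the final ratio, so the argument is purely geometric.

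First I would insert Corollary~\ref{pressure.sz}, $\langle P_e(r)\rangle^{\rm d}_{\rm SZ} = (\gamma_s q_s^{\frac{1}{2}} f^{-\frac{1}{2}})P_e(a_v)$ with $a_v=(\gamma_s\sqrt{q_s})r$, and pull the geometrical factor (which depends only on $p_v$, $q_v$, and the fixed projection orientation, not on $r$) outside the integral, exactly as in Theorems~\ref{thm.m} and~\ref{thm.mgas}. Next I would change the integration variable from $r$ to $a_v=(\gamma_s\sqrt{q_s})r$, using $r^2 dr = a_v^2\,da_v/(\gamma_s\sqrt{q_s})^3$, which converts $\int_0^r P_e(a_v)\,4\pi r^2 dr$ into $(\gamma_s^3 q_s^{3/2})^{-1}\int_0^{a_v} P_e(a_v)\,4\pi a_v^2 da_v$. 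Finally, recognizing that the ellipsoidal volume element is $dV = 4\pi p_v q_v a_v^2 da_v$, the remaining integral is $Y_{\rm SZ}(<a_v)$ (up to the common prefactor) divided by $p_v q_v$. Collecting the powers, $(\gamma_s q_s^{\frac{1}{2}} f^{-\frac{1}{2}})(\gamma_s^3 q_s^{3/2})^{-1}(p_v q_v)^{-1} = (\gamma_s^2 q_s f^{\frac{1}{2}} p_v q_v)^{-1}$, which is the claimed scaling.

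I do not anticipate a genuine obstacle: the proof is bookkeeping of geometrical exponents, and each ingredient — the SZ pressure scaling from Corollary~\ref{pressure.sz}, the radial change of variables, and the ellipsoidal volume element — has already appeared in the preceding proofs. The one point that warrants a word of care is the matching of volumes: $\langle Y_{\rm SZ}(<r)\rangle^{\rm d+}$ is an integral over the observer's spherical aperture of radius $r$, while $Y_{\rm SZ}(<a_v)$ is the intrinsic quantity integrated over the ellipsoidal region with $a_v=(\gamma_s\sqrt{q_s})r$ — the same sphere/ellipsoid correspondence used throughout \S\ref{deproj} — and I would state this explicitly so the reader is not confused by the two different integration domains.
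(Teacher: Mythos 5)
Your proposal is correct and follows essentially the same route as the paper's own proof: apply Corollary \ref{pressure.sz} inside the spherical volume integral, pull out the $r$-independent geometric factor, change variables to $a_v=(\gamma_s\sqrt{q_s})r$, and identify the remaining integral with $Y_{\rm SZ}(<a_v)/(p_vq_v)$ via the ellipsoidal volume element. The exponent bookkeeping, $(\gamma_s q_s^{1/2}f^{-1/2})(\gamma_s^3 q_s^{3/2})^{-1}(p_vq_v)^{-1}=(\gamma_s^2 q_s f^{1/2}p_vq_v)^{-1}$, matches the paper exactly.
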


\begin{proof}
Starting from the definition of $Y_{\rm SZ}$, we have for a
spherical volume,
\begin{eqnarray}
\langle Y_{\rm SZ}(<r)\rangle^{\rm d+} & = & \frac{1}{D^2_A(z)}\frac{\sigma_T}{m_e c^2}
\int_0^r \langle P_e (r)\rangle^{\rm d}_{\rm SZ} 4\pi r^2dr\\
& = & \left(\gamma_s q_s^{\frac{1}{2}} f^{-\frac{1}{2}}\right)
\frac{1}{D^2_A(z)}\frac{\sigma_T}{m_e c^2} \int_0^r 
P_e(a_v) 4\pi r^2dr,
\end{eqnarray}
where we have used Corollary \ref{pressure.sz} and the fact that
$\gamma_s$, $q_s$, and $f$ depend only on $p_v$, $q_v$, and the fixed
line-of-sight projection orientation. Effecting a change of variable
within the integral from $r$ to $a_v = (\gamma_s\sqrt{q_s})r$ yields,
\begin{eqnarray}
\langle Y_{\rm SZ}(<r)\rangle^{\rm d+} & = & \left(\gamma_s^2 q_s
f^{\frac{1}{2}}\right)^{-1} \frac{1}{D^2_A(z)}\frac{\sigma_T}{m_e c^2}
\int_0^{a_v}  P_e(a_v) 4\pi a_v^2da_v\\
& = & \left(\gamma_s^2 q_s f^{\frac{1}{2}}\right)^{-1} \left[ \frac{Y_{\rm
SZ}(<a_v)}{p_vq_v}\right],
\end{eqnarray}
where the last step used the definition of an ellipsoidal volume
element, $dV = 4\pi p_vq_va_v^2da_v$, which proves the theorem. 
\end{proof}

Cosmological simulations predict a strong correlation between $Y_{\rm
SZ}$ and cluster mass~\citep[e.g.,][]{whit02a}, which is a direct
result of the gas pressure probing the depth of the cluster potential
well.  However, because simulations do not perfectly match observations
of cluster ICM~\citep[e.g., isophotal flattening of cool
cores,][]{fang09a}, and since it is preferable to use an independent
method to constrain cosmology, there is interest in using X-ray
observations of cluster mass to calibrate $Y_{\rm SZ}$ independently.
For high-quality X-ray data of clusters where it is possible to
measure accurately the spatially resolved gas density and temperature
profiles, direct calculation of the ICM pressure profile is to be
preferred for comparison to $Y_{\rm SZ}$~\citep{arna10a}. For lower
quality data it is necessary to rely on scaling relations, such as
$Y_{\rm X}$, a quantity advocated by~\citet{krav06a} as a mass proxy
for cosmological studies,
\begin{equation} \label{eqn.yx}
Y_{\rm X}(<r) = M_{\rm gas}(<r)T_{\rm X}(<r),
\end{equation}
where $M_{\rm gas}(<r)$ is the gas mass and $T_{\rm X}(<r)$ is the
emission-weighted temperature enclosed within the spherical volume of
radius $r$, and typically a radius $r_{500}$ is adopted. Since $Y_{\rm
X}$ is related to the integrated gas pressure profile, it should be
closely related to $Y_{\rm SZ}$. Indeed, recent SZ studies find a
strong correlation between $Y_{\rm SZ}(<r_{500})$ and $Y_{\rm
X}(<r_{500})$ and between $Y_{\rm SZ}(<r_{500})$ and $M_{\rm
gas}(<r_{500})$, each correlation having similar intrinsic
scatter~\citep[e.g.,][]{ande10a,plan11a} -- see also~\citet[][]{fabj11a}. 

Before addressing the spherical average of $Y_{\rm X}$, we consider
$T_{\rm X}$.

\begin{thm} \label{thm.tx}
The deprojected spherical average of the emission-weighted temperature
integrated over the spherical volume of radius $r$ is,
\begin{equation}
\langle T_{\rm X}(<r)\rangle^{\rm d+} = T_{\rm X}(<a_v), \;{\rm
where} \; a_v = (\gamma_s\sqrt{q_s})r.  
\end{equation}
\end{thm}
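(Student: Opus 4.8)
The plan is to follow the proof of Theorem~\ref{thm.temp} almost verbatim, but now carrying the volume integrations along as in Theorems~\ref{thm.mgas} and~\ref{thm.yz}, and to exploit the fact that $T_{\rm X}$ is a \emph{ratio} of two integrals taken with the same emission weighting. First I would write the observer's emission-weighted temperature within the spherical volume of radius $r$ as
\[
\langle T_{\rm X}(<r)\rangle^{\rm d+} = \frac{\int_0^r \langle (\epsilon T)(r)\rangle^{\rm d}\,4\pi r^2\,dr}{\int_0^r \langle \epsilon(r)\rangle^{\rm d}\,4\pi r^2\,dr},
\]
using Definition~\ref{sphdep} in both numerator and denominator (equivalently one may weight $\langle\epsilon(r)\rangle^{\rm d}$ by $\langle T(r)\rangle^{\rm d+}$; the two prescriptions agree by Theorems~\ref{thm.general} and~\ref{thm.temp}).

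Next I would apply Theorem~\ref{thm.general} to $\langle(\epsilon T)(r)\rangle^{\rm d}$ and to $\langle\epsilon(r)\rangle^{\rm d}$. Each picks up the \emph{same} geometric prefactor $\gamma_s q_s^{1/2} f^{-1/2}$ and the \emph{same} argument $a_v = (\gamma_s\sqrt{q_s})r$; since that prefactor is independent of $r$, it cancels in the ratio. I would then change the integration variable from $r$ to $a_v = (\gamma_s\sqrt{q_s})r$, which inserts a Jacobian factor $(\gamma_s\sqrt{q_s})^{-3}$ into $4\pi r^2\,dr$; this constant is again common to the two integrals and cancels, leaving
\[
\langle T_{\rm X}(<r)\rangle^{\rm d+} = \frac{\int_0^{a_v}\epsilon(a_v)T(a_v)\,a_v^2\,da_v}{\int_0^{a_v}\epsilon(a_v)\,a_v^2\,da_v}.
\]
Finally I would identify the right-hand side with $T_{\rm X}(<a_v)$: the emission-weighted temperature inside the ellipsoid bounded by $a_v$ is built with the ellipsoidal volume element $dV = 4\pi p_vq_v a_v^2\,da_v$, and the constant $4\pi p_vq_v$ cancels between the $\epsilon T$ and $\epsilon$ integrals, so $T_{\rm X}(<a_v)$ equals exactly this ratio. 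Hence $\langle T_{\rm X}(<r)\rangle^{\rm d+} = T_{\rm X}(<a_v)$ with $a_v = (\gamma_s\sqrt{q_s})r$.

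There is no genuinely hard step here — the whole argument is a cascade of cancellations — so the only thing that requires care, and the closest thing to an obstacle, is keeping the measures consistent at every stage: the observer's average uses the spherical element $4\pi r^2\,dr$, while $T_{\rm X}(<a_v)$ uses the ellipsoidal element $4\pi p_vq_v a_v^2\,da_v$, and one must check that every geometric factor ($\gamma_s q_s^{1/2} f^{-1/2}$ from Theorem~\ref{thm.general}, the Jacobian $(\gamma_s\sqrt{q_s})^{-3}$, and $p_vq_v$ from the volume element) truly appears identically in numerator and denominator. This bookkeeping is also what makes $T_{\rm X}$ special: because it is an emission-weighted average (a normalized ratio), all the bias factors drop out, in contrast to $M_{\rm gas}$, $Y_{\rm X}$, or $Y_{\rm SZ}$, which retain nontrivial geometric prefactors.
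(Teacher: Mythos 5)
Your proposal is correct and follows essentially the same route as the paper's own proof: write $\langle T_{\rm X}(<r)\rangle^{\rm d+}$ as the ratio of volume integrals of $\langle(\epsilon T)(r)\rangle^{\rm d}$ and $\langle\epsilon(r)\rangle^{\rm d}$, apply Theorem~\ref{thm.general} so the common geometric prefactor cancels, change variables to $a_v=(\gamma_s\sqrt{q_s})r$, and absorb the $p_vq_v$ of the ellipsoidal volume element in the ratio to identify $T_{\rm X}(<a_v)$. Your closing remark about why the normalized (emission-weighted) ratio is bias-free, unlike $M_{\rm gas}$, $Y_{\rm X}$, or $Y_{\rm SZ}$, is consistent with the paper's subsequent results and adds no error.
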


\begin{proof}
Beginning with the definition of $\langle T_{\rm X}(<r)\rangle^{\rm d+}$  
as the deprojected integrated emission-weighted temperature, we have,
\begin{equation}
\langle T_{\rm X}(<r)\rangle^{\rm d+}  = 
{ \int_0^r \langle (\epsilon T)(r)\rangle^{\rm d} 4\pi r^2dr \over 
\int_0^r \langle\epsilon (r)\rangle^{\rm d} 4\pi r^2dr} = {
\int_0^r \epsilon (a_v) T(a_v) 4\pi r^2dr \over  \int_0^r \epsilon (a_v) 4\pi r^2dr},
\end{equation}
where we made use of Theorem \ref{thm.general} in both the numerator
and denominator and the fact that the factor $(\gamma_s\sqrt{q_s/f})$
depends only on $p_v$, $q_v$, and the fixed line-of-sight projection
orientation. Effecting a change of variable within the integrals from
$r$ to $a_v = (\gamma_s\sqrt{q_s})r$ yields,
\begin{equation}
\langle T_{\rm X}(<r)\rangle^{\rm d+}  = 
{\int_0^{a_v} \epsilon (a_v) T(a_v) 4\pi a_v^2da_v \over  \int_0^{a_v}
\epsilon (a_v) 4\pi a_v^2da_v}.
\end{equation}
Multiplying the numerator and denominator of the R.H.S.\ by $p_vq_v$
yields the emission-weighted temperature integrated over the
volume of ellipsoidal radius $a_v = (\gamma_s\sqrt{q_s})r$, which is the desired result.
\end{proof}

We remark that Theorem~\ref{thm.tx} reduces to Theorem~\ref{thm.temp}
for the special case of a small radial volume element associated with
a finite radius $r$.

\begin{coro} \label{coro.yx}
The deprojected spherical average of $Y_{\rm X}(<r)$ is,
\begin{equation}
\langle Y_{\rm X}(<r)\rangle^{\rm d+}  = 
\left(\gamma_s^{\frac{5}{2}} q_s^{\frac{5}{4}} f^{\frac{1}{4}} p_vq_v\right)^{-1}
Y_{\rm X}(<a_v), \;{\rm where} \; a_v = (\gamma_s\sqrt{q_s})r.
\end{equation} 
\end{coro}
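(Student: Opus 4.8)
The plan is to recognise that $Y_{\rm X}(<r) = M_{\rm gas}(<r)\,T_{\rm X}(<r)$ is a composite quantity in the sense of Definition~\ref{sphdep.comp}, so that its deprojected spherical average is assembled from the deprojected spherical averages of its two factors: $\langle Y_{\rm X}(<r)\rangle^{\rm d+} = \langle M_{\rm gas}(<r)\rangle^{\rm d+}\,\langle T_{\rm X}(<r)\rangle^{\rm d+}$. Both factors have already been worked out. Theorem~\ref{thm.mgas} supplies $\langle M_{\rm gas}(<r)\rangle^{\rm d+} = (\gamma_s^{5/2}q_s^{5/4}f^{1/4}p_vq_v)^{-1}M_{\rm gas}(<a_v)$, and Theorem~\ref{thm.tx} supplies $\langle T_{\rm X}(<r)\rangle^{\rm d+} = T_{\rm X}(<a_v)$, in each case with the identical substitution $a_v = (\gamma_s\sqrt{q_s})r$.

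The key step is then just to multiply these two relations. Because the evaluation radius $a_v = (\gamma_s\sqrt{q_s})r$ is the same in both theorems, the product of the intrinsic quantities collapses immediately via the definition~(\ref{eqn.yx}) applied on the ellipsoid of radius $a_v$: $M_{\rm gas}(<a_v)\,T_{\rm X}(<a_v) = Y_{\rm X}(<a_v)$. Only the $M_{\rm gas}$ factor carries a geometric prefactor, so the prefactors combine trivially to leave $(\gamma_s^{5/2}q_s^{5/4}f^{1/4}p_vq_v)^{-1}$, which is the claimed result. A useful consistency check is that dividing the $Y_{\rm SZ}$ prefactor of Theorem~\ref{thm.yz} by this one yields $\gamma_s^{1/2}q_s^{1/4}f^{-1/4}$, exactly the pressure ratio of Corollary~\ref{pressure.szx}, as expected since both $Y$-type quantities are volume integrals of pressure-like quantities.

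There is essentially no obstacle here beyond bookkeeping; the entire content lies in correctly interpreting $\langle Y_{\rm X}\rangle^{\rm d+}$. One must be careful to read it, following Definition~\ref{sphdep.comp}, as the product of the separately deprojected and spherically averaged $M_{\rm gas}$ and $T_{\rm X}$, rather than as an independent deprojection of $Y_{\rm X}$ itself — the same convention already used for $\rho_{\rm gas}$, $P_{\rm gas}$, and $Y_{\rm SZ}$ earlier in this section. With that reading fixed, the corollary follows in a single line from Theorems~\ref{thm.mgas} and~\ref{thm.tx}.
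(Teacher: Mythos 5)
Your proposal is correct and follows exactly the paper's own route: the paper likewise proves the corollary as an immediate consequence of the definition of $Y_{\rm X}$ (eqn.~\ref{eqn.yx}) together with Theorems~\ref{thm.mgas} and~\ref{thm.tx}, with the geometric prefactor coming solely from the gas-mass factor. Your added consistency check against Theorem~\ref{thm.yz} and Corollary~\ref{pressure.szx} is a nice touch but not part of the paper's argument.
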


\begin{proof}
This result is an immediate consequence of the definition of $Y_{\rm
X}(<r)$ (eqn.\ \ref{eqn.yx}) and Theorems \ref{thm.mgas} and \ref{thm.tx}.
\end{proof}

\begin{coro} \label{coro.yszx}
The ratio of the deprojected spherical averages of $Y_{\rm SZ}$ and $Y_{\rm X}$ is,
\begin{equation}
\langle  Y_{\rm SZ,X}(<r)\rangle^{\rm d+}  \equiv \frac{\langle Y_{\rm
SZ}(<r)\rangle^{\rm d+}}{\langle Y_{\rm X}(<r)\rangle^{\rm
d+}} = \left(\gamma_s^{\frac{1}{2}} q_s^{\frac{1}{4}}
f^{-\frac{1}{4}}\right) \frac{Y_{\rm SZ}(<a_v)}{Y_{\rm X}(<a_v)} = 
 \left(\gamma_s^{\frac{1}{2}} q_s^{\frac{1}{4}}
f^{-\frac{1}{4}}\right)Y_{\rm SZ,X}(<a_v), \;{\rm where} \; a_v = (\gamma_s\sqrt{q_s})r.
\end{equation} 
\end{coro}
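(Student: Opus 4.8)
The plan is to read this ratio directly off the two immediately preceding results, so the argument reduces to a single algebraic simplification. First I would record the numerator and denominator in the forms already established: from Theorem~\ref{thm.yz}, $\langle Y_{\rm SZ}(<r)\rangle^{\rm d+} = \left(\gamma_s^2 q_s f^{1/2} p_vq_v\right)^{-1} Y_{\rm SZ}(<a_v)$, and from Corollary~\ref{coro.yx}, $\langle Y_{\rm X}(<r)\rangle^{\rm d+} = \left(\gamma_s^{5/2} q_s^{5/4} f^{1/4} p_vq_v\right)^{-1} Y_{\rm X}(<a_v)$, both with the same substitution $a_v = (\gamma_s\sqrt{q_s})\,r$.

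Next I would divide. The Jacobian-type factor $p_vq_v$ cancels between the two expressions, and the surviving geometrical prefactor is $\gamma_s^{5/2-2}\,q_s^{5/4-1}\,f^{1/4-1/2} = \gamma_s^{1/2} q_s^{1/4} f^{-1/4}$. Hence $\langle Y_{\rm SZ}(<r)\rangle^{\rm d+}/\langle Y_{\rm X}(<r)\rangle^{\rm d+} = \left(\gamma_s^{1/2} q_s^{1/4} f^{-1/4}\right) Y_{\rm SZ}(<a_v)/Y_{\rm X}(<a_v)$, and identifying the intrinsic ellipsoidal ratio $Y_{\rm SZ,X}(<a_v) \equiv Y_{\rm SZ}(<a_v)/Y_{\rm X}(<a_v)$ gives the claimed identity. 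The only point that genuinely needs checking is that both $Y_{\rm SZ}(<a_v)$ and $Y_{\rm X}(<a_v)$ are evaluated at one and the same $a_v$; this holds because Theorem~\ref{thm.yz} and Corollary~\ref{coro.yx} employ the identical mapping $a_v = (\gamma_s\sqrt{q_s})\,r$, so there is no real obstacle.

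Finally, after the formal derivation I would add the physical remark that the prefactor $\gamma_s^{1/2} q_s^{1/4} f^{-1/4}$ is precisely the same geometrical factor appearing in Corollary~\ref{pressure.szx} for the SZ-to-X-ray pressure ratio (as it must be, since $Y$ is just the pressure integrated against the common ellipsoidal volume element), and that it is constant with radius. Consequently any radial trend in the deprojected spherically averaged $Y_{\rm SZ} - Y_{\rm X}$ relation is inherited entirely from the intrinsic ellipsoidal ratio $Y_{\rm SZ,X}(<a_v)$, while the spherical-averaging mismatch between the two observables produces the constant offset highlighted in the abstract.
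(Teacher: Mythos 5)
Your proposal is correct and follows essentially the same route as the paper, whose proof is simply that the result follows immediately from Theorem~\ref{thm.yz} and Corollary~\ref{coro.yx}; your explicit division, cancellation of $p_vq_v$, and check that both numerator and denominator use the same mapping $a_v=(\gamma_s\sqrt{q_s})r$ just spell out that one-line argument, and the exponent arithmetic $\gamma_s^{1/2}q_s^{1/4}f^{-1/4}$ is right. The closing physical remark connecting the prefactor to Corollary~\ref{pressure.szx} is a correct and apt observation, though not needed for the proof.
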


\begin{proof}
This result follows immediately from Theorem \ref{thm.yz} and
Corollary \ref{coro.yx}.
\end{proof}

\subsection{Connection to Stellar Dynamics}

Many of the results we have presented can be applied either directly,
or with minor modification, to a relaxed, dispersion-supported
collisionless stellar system with an isotropic velocity dispersion
tensor. Such a system obeys the equation of hydrostatic equilibrium
where the stellar density $\rho_{\rm stars}(a_v)$ replaces the ICM
density and the square of the velocity dispersion $\sigma(a_v)^2$
replaces the gas temperature. (Here it is assumed the stars, like the
gas, are merely a tracer of the gravitational potential, which is a
good approximation for galaxy clusters, and also for elliptical
galaxies well outside of the stellar half-light radius.) Consequently,
$\langle \rho_{\rm stars} (r)\rangle^{\rm d}$ obeys Theorem
\ref{thm.general}, and $\langle \sigma(<r)^2\rangle^{\rm d+}$
obeys Theorem
\ref{thm.temp}. Similarly, the deprojected spherically averaged mass
inferred from the stellar dynamics also obeys Theorem \ref{thm.m},
where in the proof one replaces the gas density with $\rho_{\rm
stars}(a_v)$ and the pressure with $\rho_{\rm
stars}(a_v)\sigma(a_v)^2$. Finally, the deprojected spherically
averaged stellar mass profile behaves as $\langle Y_{\rm
SZ}(<r)\rangle^{\rm d+}$ (Theorem \ref{thm.yz}), because the
deprojected stellar mass density behaves as $\langle P_e
(r)\rangle^{\rm d}_{\rm SZ}$ (Corollary \ref{pressure.sz}).

\section{Projection of Ellipsoidal Shells and Onion Peeling Deprojection}
\label{proj.ep}

To treat the case of binned observational data, such as the
one-dimensional surface brightness profile of a cluster, here we
describe the projection and deprojection of a system of concentric,
similar triaxial ellipsoidal shells relevant for the study of EPs.
For an ellipsoidal shell defined between $a_v^{\rm in}$ and $a_v^{\rm
  out}$ with constant emissivity, $\epsilon (a_v^{\rm in},a_v^{\rm
  out})$, throughout the shell, equation (\ref{eq.ep.proj}) becomes,
\begin{equation}
\Sigma (a_v^{\rm in}, a_v^{\rm out}; a_s) = \frac{2\epsilon
(a_v^{\rm in},a_v^{\rm out})}{\sqrt{f}}
\left[ \sqrt{(a_v^{\rm out})^2 - a_s^2} -  \sqrt{(a_v^{\rm in})^2 - a_s^2} \right],
\end{equation}
where $a_s \le a_v^{\rm out}$, and the second term in brackets is set
to zero if $a_s > a_v^{\rm in}$. This equation projects a
three-dimensional ellipsoidal shell onto a two-dimensional elliptical
surface brightness that depends only on the elliptical coordinate
$a_s$. We desire the luminosity integrated over an elliptical annulus
defined between semi-major axes, $\alpha^{\rm in}$ and $\alpha^{\rm
out}$:
\begin{eqnarray}
L(a_v^{\rm in}, a_v^{\rm out}; \alpha^{\rm in}, \alpha^{\rm out})& = &
\int_{\alpha^{\rm in}}^{\alpha^{\rm out}}\Sigma(a_v^{\rm in}, a_v^{\rm
out}; a_s) 2\pi q_s\alpha d\alpha\\
& = & \epsilon (a_v^{\rm in},a_v^{\rm out}) V^{\rm
int}(a_v^{\rm in}, a_v^{\rm out}; \alpha^{\rm in}, \alpha^{\rm out}), 
\end{eqnarray}
where,
\begin{eqnarray}
\lefteqn{V^{\rm int}(a_v^{\rm in}, a_v^{\rm out}; \alpha^{\rm in}, \alpha^{\rm
out}) = \frac{4\pi}{3}p_vq_v \, \times} \nonumber\\
& & \left( \left[ (a_v^{\rm out})^2 - (\gamma_s
q_s \alpha^{\rm in})^2\right]^{3/2} - \left[ (a_v^{\rm out})^2 - (\gamma_s
q_s \alpha^{\rm out})^2\right]^{3/2} +
\left[ (a_v^{\rm in})^2 - (\gamma_s q_s \alpha^{\rm out})^2\right]^{3/2}
- \left[ (a_v^{\rm in})^2 - (\gamma_s q_s \alpha^{\rm
    in})^2\right]^{3/2} \right). \label{eqn.vint}
\end{eqnarray}
If any terms in equation (\ref{eqn.vint}) have negative arguments,
they must be set to zero.  Note when viewed ``edge-on''
$(\theta=\phi=90^{\circ})$, i.e., down the intermediate principal
axis, ellipsoids always have $q_s=q_v$ and $\gamma_s q_s=1$. For the
special case of an oblate spheroid ($p_v=1$) viewed at arbitrary
inclination, $q_s=q_v\sqrt{f}$ and $\gamma_s q_s=1$.

Let an ellipsoid be partitioned into a series of concentric, similar
ellipsoidal shells, $a_{v,0} < a_{v,1} < a_{v,2} < \cdots < a_{v,{\rm
    N}}$. Define a corresponding set of concentric, similar,
elliptical annuli such that, $\alpha_0 < \alpha_1 < \alpha_2 < \cdots
< \alpha_{\rm N}$, where $\alpha_0 = a_{v,0}, \, \alpha_1 = a_{v,2},
\dots$. For this case, we may represent the projection of the
three-dimensional ellipsoidal shell $(a_{j-1},a_j)$ onto the
two-dimensional elliptical annulus $(\alpha_{i-1},\alpha_i)$ by,
\begin{equation}
V^{\rm int}_{ji}\equiv  V^{\rm int}(a_{v,{j-1}}, a_{v,j}; \alpha_{i-1}, \alpha_i) .
\end{equation}
That is, each shell and annulus is labeled by the index of its outer
boundary. The contribution of shell $j$ to the luminosity of annulus
$i$ is, $L_i = \epsilon_j V^{\rm int}_{ji}$, where
$\epsilon_j\equiv\epsilon (a_{v,{j-1}}, a_{v,j})$ is the constant
emissivity within the shell. We obtain the total luminosity projected
into annulus $i$ by summing the contributions from all shells $j\ge
i$,
\begin{equation}
L_i = \sum_{j=i}^{\rm N}\epsilon_j V^{\rm int}_{ji}, \label{eqn.epproj}
\end{equation}
so that the surface brightness is,
 \begin{equation}
\Sigma_{i} = \frac{L_i}{\pi q_s\left(\alpha_i^2 -
\alpha_{i-1}^2\right)}  = \frac{1}{\pi q_s\left(\alpha_i^2 -
\alpha_{i-1}^2\right)}\sum_{j=i}^{\rm N}\epsilon_j V^{\rm int}_{ji}. \label{eqn.surf.ep}
\end{equation}
Hence, we have shown that the projection matrix for spherical shells
(e.g., equation B12 of~\citealt{gast07b}) is generalized to the case
of ellipsoidal symmetry via the following mapping:
three-dimensional radius, $r\rightarrow a_v$; two-dimensional radius,
$R\rightarrow a_s = \gamma_s q_s \alpha$; and $4\pi/3 \rightarrow
p_vq_v4\pi/3$.  Moreover, by separating the first term from the others
in the summation of eqn.~(\ref{eqn.epproj}),
\begin{equation}
L_i = \epsilon_iV^{\rm int}_{ii} + \sum_{j=i+1}^{\rm N}\epsilon_j
V^{\rm int}_{ji},
\end{equation}
and then solving for the emissivity in shell $i$,
\begin{equation}
\epsilon_i = \left(\frac{L_i}{V^{\rm int}_{ii}}\right) - \sum_{j=i+1}^{\rm N}\epsilon_j
\left(\frac{V^{\rm int}_{ji}}{V^{\rm int}_{ii}}\right), \label{eqn.epdeproj}
\end{equation}
we arrive at a generalization of the ``onion peeling'' deprojection
method~\citep{deproj,kris83} appropriate for triaxial ellipsoids
$(p_v,q_v)$ with any orientation $(\theta,\phi)$.  That is, the
emissivity in shell $i$ is obtained by taking the total luminosity
observed in annulus $i$ on the sky, subtracting from it the luminosity
contributions projected from shells at larger radii ($j>i$), and then
dividing by $V^{\rm int}_{ii}$ representing the volume of intersection
between the ellipsoidal shell $i$ and the ellipsoidal cylindrical
shell defined by the base of elliptical annulus $i$ and an infinite
height. Practical implementation of eqn.\ (\ref{eqn.epdeproj})
requires assuming values for $\theta$, $\phi$, $p_v$, and $q_v$ and
measuring the value of $q_s$ on the sky.

While the above has focused on the example of the emissivity
projecting into the surface brightness, generalization to other
quantities such as the emission-weighted temperature and projected
temperature map ($\langle T\rangle_i$) is straightforward; i.e.,
\begin{equation}
\langle T\rangle_i= \frac{1}{L_i}\sum_{j=i}^{\rm N}\epsilon_j T_j V^{\rm int}_{ji}.
\end{equation}

\section{Conclusions}
\label{conc}

This is the first of two papers investigating the deprojection and
spherical averaging of ellipsoidal galaxy clusters (and massive
elliptical galaxies). We specifically consider applications to X-ray
and SZ studies, though many of the results also apply to isotropic
dispersion-supported stellar dynamical systems. A major disadvantage
of working with ellipsoidal systems, as opposed to spherical systems,
is that they generally involve numerical evaluation of computationally
expensive integrals. Here we present analytical formulas for galaxy
clusters described by a gravitational potential that is a triaxial
ellipsoid of constant shape and orientation; i.e., an ``ellipsoidal
potential'' (EP), which depends only on ellipsoidal radius, $\Phi =
\Phi(a_v)$.  While the mass density is itself not ellipsoidal for
these models, and it can take unphysical values in the vicinity of the
minor axis when the flattening is too large, we demonstrate that the
total mass enclosed within the ellipsoidal radius $a_v$ is
proportional to $d\Phi/da_v$, and is therefore well-behaved for any
smooth $\Phi(a_v)$, making it useful for many purposes.

We show that for hydrostatic X-ray studies of EPs the relationship
between the enclosed total mass, ICM temperature, and ICM density has
the same form (up to a proportionality factor) as the spherical case
where the spherical radius $r$ is replaced by $a_v$. Using this
result, along with the general result we derive for the spherical
deprojection of any ellipsoid of constant shape and orientation, we
show that the mass bias due to spherically averaging X-ray
observations is independent of the temperature profile. For the
special case of a scale-free logarithmic EP $(\Phi\propto \ln a_v)$
there is exactly zero bias for any shape, orientation, and temperature
profile. The ratio of spherically averaged ICM pressures obtained from
SZ and X-ray measurements depends only on the intrinsic shape and
projection orientation of the EP, as well as $H_0$, which provides
another illustration of how cluster geometry can be recovered through
a combination of X-ray and SZ measurements, with the key advantage
that the pressures are measured in the context of spherical symmetry
without (in principle) having to specify a parametric form for the
radial profile.  We also demonstrate that $Y_{\rm SZ}$ and $Y_{\rm X}$
have different biases as a result of spherical averaging, which lead
to an offset in the spherically averaged $Y_{\rm SZ} - Y_{\rm X}$
relation.  Paper~2 explores in more detail the biases and scatter
arising from spherical averaging, in particular using the
observationally and cosmologically motivated NFW mass profile, and
also considers the more widely investigated, and computationally
expensive, class of potentials where the mass density, rather than the
potential itself, is an ellipsoid of constant shape and orientation.

A potentially useful application of these analytical formulas is to
assess the error range on an observable accounting for deviations from
assumed spherical symmetry without having to perform the ellipsoidal
deprojection explicitly. That is, an X-ray observer can, as is
standard, analyze a cluster assuming spherical symmetry and obtain
deprojected ICM temperature and density profiles using the spherical
onion peeling procedure~\citep[see \S \ref{proj.ep};][]{deproj,kris83}
to construct, e.g., the observed spherically averaged mass profile,
$\langle M(<r)\rangle^{\rm d+}$. With the aid of Theorem \ref{thm.m}
this can be converted into the true mass profile $M(<a_v)$ for an
assumed three-dimensional shape and viewing orientation. Then by using
Theorem~\ref{thm.epspmass} the true spherically averaged mass profile
$\langle M(<r)\rangle^{\rm true}$ can be constructed. By adopting
priors for the shape and orientation and marginalizing over them, the
full range of $\langle M(<r)\rangle^{\rm true}$ owing to intrinsic
ellipsoidal geometry can be computed.

Finally, for dedicated ellipsoidal studies, we also generalize the
spherical onion peeling method to the triaxial case for a given shape
and orientation. The formulas presented for ellipsoidal shells may
also be of use for ellipsoidal projections in numerical work.

\section*{Acknowledgments}
We thank the anonymous referee for a timely and constructive
review. We gratefully acknowledge partial support from the National
Aeronautics and Space Administration under Grant No.\ NNX10AD07G
issued through the Astrophysics Data Analysis Program.

\bibliographystyle{mn2e}

\end{document}